\def \co {\mathbf{co}\,}
\def \ext {\mathbf{ext}\,}
\newtheorem{lemma}{Lemma}
\newtheorem{proposition}{Proposition}
\newtheorem{corollary}[proposition]{Corollary}
\newtheorem{claim}{Claim}
\theoremstyle{definition}
\newtheorem{definition}{Definition}
\newtheorem{example}{Example}
\theoremstyle{remark}
\newtheorem{remark}{Remark}
\begin{document}

\title{Robust Voting under Uncertainty\thanks{We are most grateful to the associate editor and two anonymous referees for detailed comments and suggestions, which have substantially improved this paper. 
We thank the seminar participants at the Workshop on Decision Making and Contest Theory (Ein Gedi, January 2017), Game Theory Workshop, Mathematical Economics at Keio, RUD, JEA Spring Meeting, SING, APET, Tokyo Metropolitan University, Hitotsubashi University, Tokyo University of Science, and Kwansei Gakuin University for their comments or for their suggestions. 
Ui acknowledges financial support by MEXT, Grant-in-Aid for Scientific Research (grant numbers 15K03348, 26245024, 18H05217). 
Nakada acknowledges financial support by MEXT, Grant-in-Aid for Scientific Research (grant numbers 16J03250, 19K13651).
Nitzan is grateful for Hitotsubashi  Institute  for  Advanced 
Study, Hitotsubashi University, for enabling the completion of this study.}}

\author{Satoshi Nakada\thanks{School of Management, Department of Business Economics, Tokyo University of Science; snakada@rs.tus.ac.jp.}\and  Shmuel Nitzan\thanks{Department of Economics, Bar-Ilan University; nitzans@biu.ac.il.}\and Takashi Ui\thanks{Hitotsubashi Institute for Advanced Study, Hitotsubashi University; oui@econ.hit-u.ac.jp.}}

\date{July 2025}
\maketitle

\begin{abstract}
\begin{spacing}{1.4}
This paper proposes normative criteria for voting rules under uncertainty about individual preferences. The criteria emphasize the importance of responsiveness, i.e., the probability that the social outcome coincides with the realized individual preferences. 
Given a convex set of probability distributions of preferences, denoted by $P$, 
a voting rule is said to be $P$-robust if, for each probability distribution in $P$,  at least one individual's responsiveness exceeds one-half. 
Our main result establishes that a voting rule is $P$-robust if and only if 
 there exists a nonnegative weight vector such that the weighted average of individual responsiveness is strictly greater than one-half under every extreme point of $P$. 
In particular, if the set $P$ includes all degenerate distributions, a $P$-robust rule is a weighted majority rule without ties.
\newline\noindent\textit{JEL classification numbers}: D71, D81.
\newline\noindent\textit{Keywords}: majority rule, weighted majority rule, responsiveness, belief-free criterion. 
\end{spacing}
\end{abstract}

\newpage

\section{Introduction}

Consider the choice of a voting rule on a succession of two alternatives (such as ``yes'' or ``no'') by a group of individuals. 
When a voting rule is chosen, the alternatives to come in the future are unknown, and the individuals are uncertain about their future preferences. 
An individual votes sincerely being concerned with the probability that the outcome agrees with his or her preference, which is referred to as responsiveness \citep{rae1969}. 
More specifically, an individual prefers a voting rule with higher responsiveness because he or she can expect that a favorable alternative is more likely to be chosen. 
For example, if an individual has a von Neumann-Morgenstern (VNM) utility function such that the utility from the passage of a favorable issue is one and that of an unfavorable issue is zero, then the expected utility equals the responsiveness. 

Imagine that a social planner proposes a voting rule under which the responsiveness of every individual is less than one-half. 
Then, the individuals may unanimously reject this proposal not only because the collective decision reflects minority preferences on average but also because there exists a voting rule that Pareto dominates this rule in terms of responsiveness. 
To demonstrate it, consider a voting rule whose collective decision always disagrees with that of the original rule, which is referred to as the inverse rule. 
Then, the responsiveness of every individual under the inverse rule is greater than one-half, i.e., greater than that under the original rule, because the sum of the former responsiveness and the latter responsiveness equals one. 
Therefore, the social planner regards it as a minimum requirement for a voting rule that the responsiveness of at least one individual should be greater than one-half, or equivalently, greater than the responsiveness under the inverse rule.

To find out whether the minimum  requirement is satisfied, the social planner must know the true probability distribution over preferences. 
In some circumstances, however, the social planner may not be able to identify the true distribution but instead has a set of possible distributions.
Then, the social planner asks the following question: Which voting rule satisfies this minimum requirement under each distribution in this set?
We call such a voting rule robustly undominated by the inverse rule, or robust for short,\footnote{We borrow the term ``robustness'' from robust statistics, statistics with good performance for data drawn from a wide range of probability distributions \citep{huber1981}.} and propose it as a normative criterion for voting rules. 
From the social planner's viewpoint, robustness is understood as a conservative criterion to avoid the worst-case scenario of violating the minimum requirement. 
In one sense, robustness may seem weak because it concerns the minimum requirement; in another sense, it appears strong, especially when the set includes all distributions, since the minimum requirement must be satisfied under all of them.

This paper introduces two types of robustness criteria and characterizes the classes of voting rules satisfying them for a given convex set of probability distributions of preferences, denoted by $P$. 
The criteria require that a voting rule should avoid the following worst-case scenarios. 
The first worst-case scenario is that the true responsiveness of every individual is less than or equal to one-half, or equivalently, the inverse rule weakly Pareto dominates this rule. 
A voting rule is said to be $P$-robust if, for any distribution in $P$,  it avoids this scenario.  
By replacing ``less than or equal to'' with ``strictly less than'' in the first worst-case scenario, we obtain a slightly more severe scenario. 
The second worst-case scenario is that the true responsiveness of every individual is strictly less than one-half, or equivalently, 
the inverse rule strictly Pareto dominates this rule. 
A voting rule is said to be weakly $P$-robust  if, for any distribution in $P$,  it avoids this scenario. 
Under weakly $P$-robust rules, the responsiveness of every individual can be less than or equal to one-half, as long as the responsiveness of at least one individual is equal to one-half. In this case, a collective decision is at best neutral to each individual's choice on average, which is never the case under $P$-robust rules. 

The main result establishes that a voting rule is $P$-robust if and only if it admits a nonnegative weight vector such that the weighted average of individual responsiveness is strictly greater than one-half under every extreme point of $P$. We also show that a voting rule is weakly $P$-robust if and only if the same weighted average is greater than or equal to one-half. 
In particular, when $P$ is the set of all distributions, a voting rule is $P$-robust if and only if it is a weighted majority rule (WMR) without ties, while a voting rule is weakly $P$-robust if and only if it is a WMR that allows ties with an arbitrary tie-breaking rule. 
In this case, $P$-robustness can be directly shown to be equivalent to efficiency within the set of all random voting rules, using the efficiency properties of voting rules discussed in \citet{azrielikim2016}.\footnote{We thank a reviewer for highlighting this issue, which is addressed in Section~\ref{Robustness vs. efficiency}.}

A concept closely related to $P$-robustness is belief-neutral efficiency introduced by \citet{brunnermeieretal2014}  in their study of financial markets. An allocation is said to be belief-neutral efficient if it is efficient with respect to every belief in the convex hull of individuals' subjective beliefs, which may be heterogeneous. 
When $P$ is the convex hull of all voters' subjective beliefs, $P$-robustness is equivalent to belief-neutral efficiency within the set of two voting rules, where each rule is the inverse rule of the other.

Both belief-neutral efficiency and $P$-robustness rely on a common set of beliefs, since Pareto dominance under heterogeneous beliefs is not normatively compelling due to the issue of spurious unanimity \citep{gilboaetal2004, mongin2016}.\footnote{\citet{blumeetal2018} compare complete markets and incomplete markets when individuals have heterogeneous beliefs and demonstrate potential social benefits from restrictions on trade that make markets incomplete.} 
To illustrate spurious unanimity in a voting context, consider individuals voting on a proposed amendment. There are two types of individuals: Type 1 individuals prefer approval and believe that more than two-thirds of voters share their preference; Type 2 individuals prefer disapproval and believe that more than two-thirds of voters share their preference. When a two-thirds rule is proposed, both types unanimously support its adoption, as each believes it will lead to their favored outcome.
However, as emphasized by \citet{mongin2016}, such unanimity is spurious: it arises from mutually inconsistent beliefs.

The rest of this paper is organized as follows. 
Section~\ref{A weighted majority rule} provides a summary of the notation.  
Section~\ref{Voting under Knightian uncertainty} introduces the concepts of $P$-robustness. 
Section~\ref{Robustness of nonanonymous rules} provides the main results. 
Section~\ref{Robustness vs. efficiency} discusses efficient voting rules in comparison to $P$-robust voting rules.   
We discuss two extensions in Section~\ref{section: discussions} and conclude the paper in Section \ref{Conclusion}.

\subsection{Related literature}\label{Related literature}

\citet{rae1969} and \citet{taylor1969} were the first to use responsiveness to characterize voting rules, followed by \citet{straffin1977} and \citet{fleurbaey2008}. 
Their result, which we call the Rae-Taylor-Fleurbaey (RTF) theorem,  states that a voting rule is a WMR if and only if it maximizes the corresponding weighted sum of responsiveness over all individuals, where a probability distribution of preferences is assumed to be known. The normative implication of the RTF theorem is efficiency of WMRs:\footnote{See also \citet{schmitztroger2012} and \citet{azrielikim2014}.} a voting rule is efficient in the set of random voting rules in terms of responsiveness if and only if it is a WMR.

There are other important strands of literature on voting rules using responsiveness.
\citet{barberajackson2004} ask which voting rule can survive in the long run when individuals can propose to replace the current rule with a different one, where preferences over voting rules are given by responsiveness. 
A voting rule is said to be self-stable if no other rule can replace it with sufficient support from the group, where the decision to change the rule is made by applying the original rule itself. Focusing on anonymous rules, \citet{barberajackson2004} show that self-stable rules may not exist, and also establish conditions that guarantee their existence. 
\citet{azrielikim2016} extend the analysis to nonanonymous rules and identify self-stable WMRs. 

\citet{schmitztroger2012} and \citet{azrielikim2014} consider a direct two-alternative mechanism, where individuals report their cardinal utility, and relate it to SMRs and WMRs. 
Their approach is in clear contrast to that of the above papers as well as ours, where individuals report their ordinal ranking of two alternatives. 
\citet{schmitztroger2012} focus on a symmetric environment and show that a SMR is the utilitarian rule subject to incentive compatibility constraints and that it is interim incentive efficient. 
\citet{azrielikim2014} extend the analysis to asymmetric environments and show that the utilitarian rule subject to incentive compatibility constraints is a WMR, and a rule is interim incentive efficient if and only if it is a WMR.

This paper also joins a literature on the worst-case approach to economic design. 
Most studies in this literature focus on design of a mechanism maximizing the worst-case value of an objective function. 
For example, \citet{chungely2007} consider a revenue maximization problem in a private value auction where the auctioneer does not know the agents' exact belief structures\footnote{This is a standard assumption in robust mechanism design \citep{bergemannmorris2005}.} and show that the optimal auction rule is a dominant-strategy mechanism when the auctioneer evaluates rules by their worst-case performance.  
\citet{carroll2015} considers a moral hazard problem where the principal does not know the agent's set of  possible actions exactly and shows that the optimal contract is linear when the principal evaluates contracts by their worst-case performance.\footnote{Other recent examples of economic design with worst-case objectives include \citet{bergemannschlag2011}, \citet{yamashita2015}, \citet{carroll2017}, 
\citet{chenli2017}, \citet{auster2018}, \citet{carrascoet al2018}, \citet{du2018}, \citet{bergemannetal2019}, \citet{yamashitazhu2021}, and \citet{brooksdu2021,brooksdu2024} among others. See also a survey by \citet{carroll2018}. In decision theory, \citet{gilboaschmeidler1989} is a seminal paper.} 
The maximin criterion adopted in the aforementioned papers requires ``performance robustness,'' whereas our criterion is based on ``Pareto-undominance robustness,'' a version of the Pareto criterion evaluated under multiple beliefs that also avoids the problem of spurious unanimity. 

\section{Voting rules}\label{A weighted majority rule}
Consider a group of individuals $N=\{1,\ldots, n\}$ that faces a choice between two alternatives $-1$ and $1$.  
The choice of individual $i\in N$ is represented by a decision variable $x_i\in\{-1,1\}$. 
The choices of the group members are summarized by a decision profile $x=(x_i)_{i\in N}$. Let $\mathcal{X}= \{-1,1\}^n$ denote the set of all possible decision profiles. 

A deterministic voting rule, or a voting rule for short, is a mapping $\phi: \mathcal{X}\to\{-1,1\}$, which assigns a collective decision $\phi(x)\in \{-1,1\}$ to each $x\in \mathcal{X}$. 
The set of all voting rules is denoted by $\Phi$. 
For $\phi\in \Phi$, 
let $-\phi\in \Phi$ be the voting rule whose collective decision always disagrees with that of $\phi$, i.e., $(-\phi)(x)=-\phi(x)$ for all $x\in \mathcal{X}$. 
This voting rule is referred to as the inverse rule of $\phi$.  
Although we mainly study a deterministic voting rule, 
we occasionally consider a random voting rule as well. 
A random voting rule is a mapping $\bar\phi:\mathcal{X}\to [-1,1]$, which is interpreted as follows: $\bar \phi$ assigns a collective decision $+1$ with probability $(1+\bar\phi(x))/2$ and $-1$ with probability $(1-\bar\phi(x))/2$ to each $x\in \mathcal{X}$. 
The set of all random voting rules is denoted by $\Delta(\Phi)$.

A voting rule $\phi\in \Phi$ is a weighted majority rule  (WMR)\footnote{For an axiomatic foundation of weighted majority rules, see \citet{fishburn1973}, \citet{fishburngehrlein1977}, \citet{nitzanparoush1981}, \citet{einylehrer1989}, and \citet{taylorzwicker1992} among others.} if there exists a nonzero weight vector\footnote{We allow negative weights, which appear in Proposition \ref{Fleurbaey}.}  $w=(w_i)_{i\in N}\in \mathbb{R}^n$ satisfying
\begin{equation}
\phi(x)=
\begin{cases}
1 & \text{ if }\sum_{i\in N}w_ix_i> 0,\\
-1 & \text{ if }\sum_{i\in N}w_ix_i<0.
\end{cases} \label{def: SMR}
\end{equation}
A simple majority rule (SMR) is a special case with positive equal weights, i.e., $w_i=w_j>0$ for all $i,j\in N$. 
When there is a tie, i.e. $\sum_{i\in N}w_ix_i= 0$, a tie-breaking rule is used to determine a WMR. 
For example, a SMR requires a tie-breaking rule if $n$ is even. 
A tie-breaking rule can be either deterministic or random. 
A WMR with a random tie-breaking rule is a random voting rule $\bar\phi\in \Delta(\Phi)$ satisfying (\ref{def: SMR}) (with $\phi$ replaced by $\bar \phi$).

A voting rule $\phi\in \Phi$ is anonymous if it is symmetric in its $n$ variables; that is, $\phi(x)=\phi(x^\pi)$ for each $x\in \mathcal{X}$ and each permutation $\pi:N\to N$, where $x^\pi=(x_{\pi(i)})_{i\in N}$.  A SMR is anonymous if $n$ is odd or if $n$ is even and its tie-breaking rule is anonymous, i.e., symmetric in its $n$ variables. A WMR with nonnegative weights is anonymous if and only if it is an anonymous SMR. 

The following characterization of WMRs is immediate from the definition.
\begin{lemma}\label{weighted majority characterization}
A voting rule $\phi\in\Phi$ is a WMR with a weight vector $w\in \mathbb{R}^n$ 
if and only if 
\begin{equation}
\phi(x)\sum_{i\in N}w_ix_i\geq 0 \, \text{ for all }\, x\in\mathcal{X}. \label{weighted majority}
\end{equation}
A voting rule $\phi\in \Phi$ is a WMR with a weight vector $w\in \mathbb{R}^n$ allowing no ties if and only if 
\begin{equation}
\phi(x)\sum_{i\in N}w_ix_i> 0 \, \text{ for all }\, x\in\mathcal{X}. \label{weighted majority 2}
\end{equation}
\end{lemma}

\section{Voting under uncertainty}\label{Voting under Knightian uncertainty}

Assume that $x\in \mathcal{X}$ is randomly drawn according to a probability distribution $p\in \Delta(\mathcal{X})\equiv\{p'\in \mathbb{R}_+^{|\mathcal{X}|}:\sum_{x\in \mathcal{X}}p'(x)=1\}$. 
For a deterministic voting rule $\phi\in \Phi$, let 
\[
r_i(\phi,p)\equiv p(\{x\in \mathcal{X}:\phi(x)=x_i\})=\sum_{x\in \mathcal{X}:\phi(x)=x_i}p(x)
\]
be the probability that $i$'s choice agrees with the collective decision, which is referred to as responsiveness or the Rae index \citep{rae1969}. 
It is calculated as 
\begin{equation}
r_i(\phi,p)=(E_p[\phi(x)x_i]+1)/2
\label{fundamental 2}
\end{equation}
because 
\begin{equation}
E_p[\phi(x)x_i]
=\sum_{x:\phi(x)=x_i}p(x)-\sum_{x:\phi(x)\neq x_i}p(x)
=2r_i(\phi,p)-1. \notag\label{fundamental 3}
\end{equation}
Using the above calculation and Lemma \ref{weighted majority characterization}, 
we can characterize {WMRs with nonnegative weights as the class of voting rules} for which the weighted average of individual responsiveness exceeds one-half under every probability distribution.
\begin{lemma}\label{weighted majority characterization responsiveness}
A voting rule $\phi\in\Phi$ is a WMR with a weight vector $w\in \mathbb{R}_+^n$ 
if and only if 
\begin{equation}
\frac{\sum w_ir_i(\phi,p)}{\sum w_i}\geq \frac{1}{2} \ \text{ for all }\ p\in\Delta(\mathcal{X}). \label{weighted majority res}
\end{equation}
A voting rule $\phi\in \Phi$ is a WMR with a weight vector $w\in \mathbb{R}_+^n$ allowing no ties if and only if 
\begin{equation}
\frac{\sum w_ir_i(\phi,p)}{\sum w_i}> \frac{1}{2} \ \text{ for all }\  p\in\Delta(\mathcal{X}). \label{weighted majority res 2}
\end{equation}
\end{lemma}
\begin{proof}
Taking the expectation on both sides of \eqref{weighted majority} and applying \eqref{fundamental 2} yields \eqref{weighted majority res}.
Conversely, \eqref{weighted majority} follows from \eqref{weighted majority res} by considering each degenerate distribution that assigns probability one to a single $x \in \mathcal{X}$.
Therefore, \eqref{weighted majority} and \eqref{weighted majority res} are equivalent.
An analogous argument shows that \eqref{weighted majority 2} is equivalent to \eqref{weighted majority res 2}.
\end{proof}

An individual is assumed to vote sincerely and prefer a voting rule with higher responsiveness because he or she expects that a favorable alternative is more likely to be chosen. A discussion and justification of these assumptions appear in the online appendix.\footnote{We give two alternative definitions of robustness. Redefining robustness as a requirement for a voting
mechanism, where individuals can vote insincerely, we show that sincere voting is weakly dominant for every
individual if the mechanism satisfies robustness. Redefining robustness using the expected utility of individuals,
we provide justification for the original definition based on responsiveness.}

We consider the following Pareto order over $\Phi$. 
We say that a voting rule $\phi\in \Phi$ is weakly Pareto-preferred to $\phi'\in \Phi$ 
under $p$ if $r_i(\phi,p)\geq r_i(\phi',p)$ for all $i\in N$. 
Similarly, we say that $\phi$ is Pareto-preferred to $\phi'$ under $p$ if $r_i(\phi,p)\geq r_i(\phi',p)$ for all $i\in N$ with strict inequality holding for at least one individual, and that $\phi$ is strictly Pareto-preferred to $\phi'$ under $p$ if $r_i(\phi,p)> r_i(\phi',p)$ for all $i\in N$. 

Imagine that a social planner proposes a voting rule $\phi\in \Phi$ such that the responsiveness of every individual is less than or equal to one-half, 
\begin{equation}
r_i(\phi,p)\leq 1/2 \text{ for all }i\in N, \label{robustness condition 1}
\end{equation}
which means that the collective decision under $\phi$ reflects minority preferences on average.
Note that \eqref{robustness condition 1} is true if and only if the inverse rule $-\phi$ is weakly Pareto-preferred to $\phi$,  
\begin{equation}
r_i(-\phi,p)\geq r_i(\phi,p) \text{ for all }i\in N,\notag\label{robustness condition 1'}
\end{equation} 
because 
$r_i(\phi,p)+ r_i(-\phi,p)=1$. 
Thus, the social planner regards it as a minimum requirement for a voting rule that the responsiveness of at least one individual should be greater than one-half. 

Let $P\subset \Delta(\mathcal{X})$ be a convex set of possible probability distributions of preferences. The social planner knows that the true distribution lies in $P$, but cannot identify it and therefore cannot accurately compute responsiveness.
Under these circumstances, the collective worst-case scenario for adopting $\phi$ is that the true responsiveness of every individual is less than or equal to one-half, or equivalently, the inverse rule $-\phi$ is weakly Pareto-preferred to $\phi$ under the true distribution. 
We say that $\phi$ is robustly undominated by the inverse rule under $P$, or $P$-robust for short, if this worst-case scenario is avoided under every distribution in $P$. 
Note that, if $P \subset P' \subset \Delta(\mathcal{X})$, then $P$-robustness is a weaker requirement than $P'$-robustness by definition.

\begin{definition}\label{def: robust}
A deterministic voting rule $\phi\in\Phi$ is {\em $P$-robust} if the inverse rule $-\phi$ is not weakly Pareto-preferred to $\phi$ under any $p\in P$, or equivalently, for each  $p\in P$, the responsiveness of at least one individual is strictly greater than one-half. A $\Delta(\mathcal{X})$-robust rule is simply referred to as a robust rule. 
\end{definition}

An immediate sufficient condition for $P$-robustness is that there exists a nonnegative weight vector such that the weighted average of individual responsiveness is strictly greater than one-half for each $p \in P$.
If a voting rule satisfies this condition, then the maximum responsiveness must also be strictly greater than one-half, implying $P$-robustness.
In particular, by Lemma~\ref{weighted majority characterization responsiveness}, any WMR with nonnegative weights and no ties satisfies this sufficient condition for arbitrary $P$.
Our main results show that the same condition is also necessary, as a consequence of a theorem of the alternative.


We also consider a weaker version of $P$-robustness by considering the following more severe collective worst-case scenario for adopting $\phi$: the true responsiveness of every individual is {\em strictly} less than one-half, or equivalently, the inverse rule $-\phi$ is {\em strictly} Pareto-preferred to $\phi$ under the true distribution.  
We say that $\phi$ is weakly $P$-robust if this worst-case scenario is avoided under every distribution in $P$.


\begin{definition}\label{def: weak robust}
A deterministic voting rule $\phi\in\Phi$ is {\em weakly $P$-robust} if the inverse rule $-\phi$ is not strictly Pareto-preferred to $\phi$ under any $p\in P$,  or equivalently, for each  $p\in P$, the responsiveness of at least one individual is greater than or equal to one-half. 
A weakly $\Delta(\mathcal{X})$-robust rule is simply referred to as a weakly robust rule. 
\end{definition}

If $\phi\in \Phi$ is weakly $P$-robust but not $P$-robust, there exists $p\in P$ such that 
$\max_{i\in N}r_i(\phi,p)= 1/2$. 
If such $p$ is the true distribution, a collective decision is at best neutral to each individual's choice on average, 
which is never the case with robust rules.


To illustrate the concept of $P$-robustness, we present two examples. 
In both cases, for a given voting rule $\phi$, we denote by $Q_\phi\subset\Delta(\mathcal{X})$ the set of probability distributions under which the arithmetic mean of individual responsiveness is strictly greater than one-half.

\begin{example}
Consider the two-thirds rule denoted by $\phi$ with $9$ individuals: $\phi(x)=1$ if and only if $\#\{i : x_i = 1\} \geq 6$. 
Then, for arbitrary $P\subseteq Q_\phi$, $\phi$ is $P$-robust since at least one individual's responsiveness is strictly greater than one-half under every $p\in Q_\phi$. 
Compare the two-thirds rule and the SMR and observe that both yield identical outcomes except when five individuals choose $1$ and four choose $-1$.
Thus, the total responsiveness under the SMR exceeds that under $\phi$ by $p(\{x : \#\{i : x_i = 1\} = 5\})$ for each $p \in Q_\phi$ (see \eqref{total responsiveness}). 
This implies that the two-thirds rule $\phi$ is Pareto-dominated by the SMR under $p \in Q_\phi$ such that $p(\{x : \#\{i : x_i = 1\} = 5\}) > 0$, provided the symmetry condition holds: $p(x) = p(x^\pi)$ for all $x \in \mathcal{X}$ and $\pi \in \Pi$.\footnote{For example, such $p$ is given by $p(x)=1-\varepsilon$ if $\#\{i:x_i=1\}=9$ and $p(x)=\varepsilon/\binom{9}{5}$ if $\#\{i:x_i=1\}=5$, where $0<\varepsilon<1/2$.}
In other words, $Q_\phi$-robustness does not imply efficiency.
\end{example}

\begin{example}
Consider the unanimity rule denoted by $\phi$ with $n \geq 3$, where the collective decision is $1$ if and only if all individuals vote $1$. 
This rule is $P$-robust for any $P \subseteq Q_\phi$, but it can also be $P$-robust for some $P$ with $P \cap Q_\phi = \emptyset$. 
For instance, let $p_i \in \Delta(\mathcal{X})$ be the degenerate distribution that assigns probability one to the profile in which individual $i$ votes $-1$ while all others vote~$1$.
Then, $\phi$ is $\{p_i\}$-robust for each $i\in N$ since the responsiveness of individual $i$ is one. 
Thus, although $P = \{p_i\}_{i \in N}$ is not convex, $\phi$ satisfies the requirement of $P$-robustness. Note that $P\cap Q_\phi=\emptyset$. 
On the other hand, $\phi$ is not $\co(P)$-robust, where $\co(P)$ denotes the convex hull of $P$.
To see this, let $p = \sum_{i \in N} p_i / n \in \co(P)$. 
Then,
\[
r_i(\phi, p) = \sum_{j \in N} r_i(\phi, p_j)/n = 1/n < 1/2
\]
for each $i \in N$, implying that $\phi$ is not $\co(P)$-robust.
\end{example}

\section{Main results}\label{Robustness of nonanonymous rules}

\subsection{Characterizations}\label{Characterizations}

In the main results that follow, we show that a voting rule is $P$-robust if and only if there exists a nonnegative weight vector such that the weighted average of individual responsiveness $r_i(\phi,p)$ is strictly greater than one-half under every extreme point of $P$. 
We also show that a voting rule is weakly $P$-robust if and only if the same weighted average is greater than or equal to one-half.

\begin{proposition}\label{main result}
Suppose that $P\subseteq \Delta(\mathcal{X})$ is the convex hull of a finite set $\{p_j\}_{j\in M}\subsetneq \Delta(\mathcal{X})$, where $M=\{1,\ldots, m\}$. 
Then, the following results hold. 
\begin{enumerate}[\em (1)]
	\item 
A voting rule $\phi\in \Phi$ is $P$-robust if and only if there exists $w\in \mathbb{R}_{+}^n$ such that 
\begin{equation}
\frac{\sum w_i r_i(\phi,p_j)}{\sum w_i}> 1/2\, \text{ for all }\, j\in M.
\label{robustness condition w1}    	
\end{equation}
In particular, a deterministic voting rule is robust if and only if it is a WMR with nonnegative weights such that there are no ties. 

\item A voting rule $\phi\in \Phi$ is weakly $P$-robust if and only if there exists $w\in \mathbb{R}_{+}^n$ such that 
\begin{equation}
\frac{\sum w_i r_i(\phi,p_j)}{\sum w_i}\geq 1/2\, \text{ for all }\, j\in M.
\label{robustness condition w2}    	
\end{equation}
In particular, a deterministic voting rule is weakly robust if and only if it is a WMR with nonnegative weights. 
\end{enumerate}
\end{proposition}

\begin{remark}
Proposition \ref{main result} assumes that $P$ is the convex hull of a finite set $\{p_j\}_{j\in M}$, but we can obtain the same result even if $P$ is an arbitrary closed convex set. See Appendix \ref{an arbitrary convex set}.
\end{remark}

\begin{remark}
Since responsiveness is a linear function of a probability distribution, $P$-robustness implies that \eqref{robustness condition w1} holds not only for each $p_j$ but for all $p \in P$.
Similarly, weak $P$-robustness implies that \eqref{robustness condition w2} holds for all $p \in P$, not just for each $p_j$.
\end{remark}

In the case of anonymous rules, it is reasonable to consider $P$-robustness with a structure that is invariant under permutations of individuals' indices, thereby respecting anonymity. 
We say that $P\subset\Delta(\mathcal{X})$ is permutation invariant if $p\in P$ implies $p^\pi\in P$ for any permutation $\pi$, where $p^\pi$ is defined by $p^\pi(x)=p(x^\pi)$ for all $x\in \mathcal{X}$. 
The following corollary characterizes $P$-robust anonymous rules with permutation invariant $P$, where we can choose equal weights $w_1=\cdots=w_n$. 
\begin{corollary}\label{main result cor new}
In addition to the conditions in Proposition \ref{main result}, 
assume that $P$ is permutation invariant. 
Then, 
an anonymous voting rule $\phi\in \Phi$ is $P$-robust if and only if the arithmetic mean of individual responsiveness is strictly greater than one-half under every $p_j$: 
\begin{equation}
\sum_{i\in N} r_i(\phi,p_j)/n> 1/2\,\text{ for all }\, j\in M.
\label{robustness condition r1}    	
\end{equation}
Similarly, an anonymous voting rule $\phi\in \Phi$ is weakly $P$-robust if and only if  the arithmetic mean of individual responsiveness is greater than or equal to one-half under every $p_j$: 
\begin{equation}
\sum_{i\in N} r_i(\phi,p_j)/n\geq 1/2\, \text{ for all }\, j\in M.
\label{robustness condition r2}    	
\end{equation}
\end{corollary}

Corollary \ref{main result cor new} implies that, for any permutation invariant convex set $P$ (not necessarily the convex hull of a finite set), if $\phi$ is $P$-robust, then under every $p \in P$, the average responsiveness is strictly greater than one-half. 
To see why, fix an arbitrary $p \in P$, and let $\co(\{p^\pi\})$ denote the convex hull of all permutations $\{p^\pi\}$ of $p$. By the permutation invariance and convexity of $P$, we have $\co(\{p^\pi\}) \subset P$, so $\phi$ is also $\co(\{p^\pi\})$-robust. It then follows from Corollary \ref{main result cor new} that \eqref{robustness condition r1} holds for $p_j = p$.

To apply this observation to Examples 1 and 2, we express the arithmetic mean of individual responsiveness in terms of a probability distribution.  
For an anonymous rule $\phi$ and a probability distribution $p$, 
let $p_k$ and $d_{k,\phi}$ denote the probability that exactly $k$ individuals choose $1$, and the number of individuals whose choice coincides with the collective decision when exactly $k$ individuals choose $1$ under $\phi$, respectively. 
Then, the total responsiveness is given by 
$
\sum_{k=0}^nd_{k,\phi}\cdot p_k$ since 
\begin{equation}
\sum_{i \in N}r_i(\phi,p)=
\sum_{i \in N}\sum_{x:\phi(x)=x_i}p(x)=\sum_{x\in\mathcal{X}}\#\{i \in N: x_i=\phi(x)\}\cdot p(x)=\sum_{k=0}^nd_{k,\phi}\cdot p_k,\label{total responsiveness}
\end{equation}
and dividing this by $n$ yields the average responsiveness.

\setcounter{example}{0}

\begin{example}[continued]
The two-thirds rule is an anonymous rule.
Let $P\subset\Delta(\mathcal{X})$ be a permutation invariant closed convex set. Then, 
Corollary \ref{main result cor new} implies that $\phi$ is $P$-robust if and only if $P\subset Q_\phi$, where $Q_\phi$ is the set of probability distributions satisfying 
\[
\sum_{k=0}^5 (9-k)\cdot p_k
+\sum_{k=6}^9 k\cdot p_k>9/2.
\]
\end{example}

\begin{example}[continued]
The unanimity rule is an anonymous rule. 
Recall that $\phi$ is $\{p_i\}$-robust for each $i\in N$ and note that $\{p_i\}$ is not permutation invariant. 
On the other hand, $\phi$ is $P$-robust for a permutation invariant closed convex set $P \subset \Delta(\mathcal{X})$ if and only if $P \subset Q_\phi$, where $Q_\phi$ is the set of probability distributions satisfying
\[
\sum_{k=0}^{n-1} (n-k)\cdot p_k + n\cdot p_n > \frac{n}{2}.
\]
\end{example}

\subsection{Proof of the characterizations}

To sketch the proof for Part (1) of Proposition \ref{main result}, recall that $P$ is the convex hull of $\{p_j\}_{j\in M}$. 
Thus, $\phi$ is {\em not} $P$-robust if and only if there exists $\lambda\in \Delta(M)$ such that 
\begin{equation}
r_i(\phi,\sum_{j\in M}\lambda_jp_j)=\sum_{j\in M}\lambda_jr_i(\phi,p_j)\leq 1/2 \text{ for all }i\in N, 
\notag
\end{equation}
which is equivalent to
\begin{equation}
\sum_{j\in M}\lambda_jE_{p_j}[\phi(x)x_i]
\leq 0 \text{ for all }i\in N \label{robustness condition 2}    
\end{equation}
 by \eqref{fundamental 2}. 
 Using a theorem of alternatives, 
 we can obtain a necessary and sufficient condition for the existence of a solution to \eqref{robustness condition 2}. The condition is equivalent to the nonexistence of a weight vector $w\in\mathbb{R}_+^n$ that satisfies 
\begin{equation}
    \sum_{i\in N}w_iE_{p_j}[\phi(x) x_i]  
>0\text{ for all }j\in M, \notag\label{robustness condition 2'}    
\end{equation}
which is rewritten as \eqref{robustness condition w1} by \eqref{fundamental 2}.
We can prove Part (2) similarly.

For the formal proofs, we use the following inequality symbols.  For vectors  $\xi$ and $\eta$, we write $\xi\geq \eta$ if $\xi_i\geq \eta_i$ for each $i$, $\xi>\eta$ if $\xi_i\geq \eta_i$ for each $i$ and $\xi\neq \eta$, and $\xi\gg \eta$ if $\xi_i>\eta_i$ for each $i$.

We  write $l_{ij}=E_{p_j}[\phi(x)x_i]$ for each $(i,j)\in N\times M$ and define the $n\times m$ matrix $L=[l_{ij}]_{n\times m}$. 
Part (1) states that exactly one of the following holds. 
\begin{enumerate}[(a)]
\item 
There exists $w=(w_i)_{i\in N}\geq 0$ such that
$
\sum_{i\in N} w_i l_{ij}=
\sum_{i\in N}w_iE_{p_j}[\phi(x) x_i]  
>0
$
for each $j\in M$, 
or equivalently, 
$w^\top L\gg 0$.
\item There exists $\lambda=(\lambda_j)_{j\in M}> 0$  such that 
$
\sum_{j\in M} l_{ij}\lambda_j
=\sum_{j\in M}\lambda_jE_{p_j}[\phi(x)x_i]
\leq 0
$
for each $i\in N$, or equivalently,  
$L\lambda\leq 0$.
\end{enumerate}
Similarly, 
Part (2) states that exactly one of the following holds. 
\begin{enumerate}[(a$^\prime$)]
\item 
There exists $w=(w_i)_{i\in N}> 0$ such that
$
\sum_{i\in N} w_i l_{ij}=
\sum_{i\in N}w_iE_{p_j}[\phi(x) x_i]  
\geq 0$ 
for each $j\in M$, 
or equivalently, 
$w^\top L\geq 0$.
\item There exists $\lambda=(\lambda_j)_{j\in M}> 0$  such that 
$
\sum_{j\in M} l_{ij}\lambda_j
=\sum_{j\in M}\lambda_jE_{p_j}[\phi(x)x_i]
< 0
$
for each $i\in N$, or equivalently,  
$L\lambda\ll 0$.
\end{enumerate}

The following theorem of alternatives due to \citet{neumannmorgenstern1944}\footnote{\citet{neumannmorgenstern1944} use this result to prove the minimax theorem.} guarantees that 
exactly one of (a) and (b) holds, and that exactly one of (a$^\prime$) and (b$^\prime$) holds. The same result also appears in the work of \citet[Theorem 2.10]{gale1960} as a corollary of Farkas' lemma. 

\begin{lemma}\label{alternative1}
Let $A$ be an $n\times m$ matrix. Exactly one of the following alternatives holds. 
\begin{itemize}
\item There exists $\xi\in \mathbb{R}^n$ satisfying 
\[
\xi^\top A \gg 0, \ \xi \geq 0.
\]
\item There exists $\eta\in \mathbb{R}^m$ satisfying 
\[
 A\eta\leq 0,\ \eta>0.
\]
\end{itemize}
\end{lemma}

\begin{proof}[Proof of Proposition \ref{main result}]
To establish Part (1), plug $L$, $w$, and $\lambda$ into  $A$, $\xi$, and $\eta$ in Lemma \ref{alternative1}, respectively. 
Then, Lemma \ref{alternative1} implies that exactly one of (a) and (b) holds. 

To establish Part (2), plug 
 $-L^\top$, $\lambda$, and $w$ into  $A$, $\xi$, and $\eta$ in Lemma \ref{alternative1}, respectively, where we replace $(n,m)$ with $(m,n)$. 
Then, Lemma \ref{alternative1} implies that exactly one of (a$^\prime$) and (b$^\prime$) holds.  
\end{proof}

\begin{proof}[Proof of Corollary \ref{main result cor new}]
For each statement, it suffices to show the ``only if'' part. 
Suppose that an anonymous rule $\phi$ is $P$-robust. 
Since $P$ is permutation invariant, we must have $\{p_j\}_{j\in M}=\{p_j^\pi\}_{j\in M}$ for each permutation $\pi$. 
By Proposition \ref{main result},  there exists $w\in \mathbb{R}_{+}^n$ satisfying \eqref{robustness condition w1}.    Then, for any permutation $\pi$, it holds that 
\begin{equation}
\frac{\sum_{i}w_{i} r_i(\phi,p_j^\pi)}{\sum_{i}w_i}=\frac{\sum_{i}w_{\pi(i)} r_i(\phi,p_j)}{\sum_{i}w_i}> 1/2\text{ for all }j\in M
\label{cor key eq new}
\end{equation} 
since 
\[
r_i(\phi,p^\pi)=\sum_{x:\phi(x)=x_i}p(x^\pi)
=\sum_{x:\phi(x^{\pi^{-1}})=x_{\pi^{-1}(i)}}p(x)
=\sum_{x:\phi(x)=x_{\pi^{-1}(i)}}p(x)=r_{\pi^{-1}(i)}(\phi,p).
\]
Taking the arithmetic mean of \eqref{cor key eq new} over all permutations yields the first statement. 
The second statement follows by a similar argument.
\end{proof}

\subsection{Robust and weak robust anonymous rules}\label{Anonymous rules}

To discuss the implications of the difference between $P$-robustness and weak $P$-robustness,
we consider anonymous robust and weak robust rules. 
By Proposition \ref{main result}, if $n$ is odd, a SMR is the unique rule that is both robust and anonymous. 
However, if $n$ is even, no anonymous rule is robust. 

\begin{corollary}\label{anonymous robust}
Suppose that $n$ is odd. Then, a deterministic voting rule is robust and anonymous if and only if it is a SMR. 
Suppose that $n$ is even. Then, no deterministic voting rule is both robust and anonymous. 
\end{corollary}
\begin{proof}
A voting rule is robust and anonymous if and only if it is an anonymous WMR with nonnegative weights allowing no ties, which is a SMR with odd $n$. 
\end{proof}

Corollary \ref{anonymous robust} implies that an anonymous rule is not robust if $n$ is even or if it is not a SMR.  
For example, a supermajority rule is not robust because it is anonymous.


Although no anonymous rule is robust when $n$ is even, there exists a weakly robust anonymous rule regardless of $n$ by Proposition \ref{main result}, 
which is an anonymous SMR (a SMR with an anonymous tie-breaking rule). 
\begin{corollary}\label{anonymous weakly robust}
A deterministic voting rule is weakly robust and anonymous if and only if it is an anonymous SMR. 
\end{corollary}
\begin{proof}
A voting rule is weakly robust and anonymous if and only if it is an anonymous WMR with nonnegative weights, which is an anonymous SMR. 
\end{proof}


Given the above corollaries, we illustrate the difference between robust SMRs and weak robust SMRs as well as the trade-off between robustness and anonymity. Suppose that $n$ is even. 
By Proposition~\ref{main result}, 
a SMR with some tie-breaking rule is robust if and only if it is represented as a WMR allowing no ties, and such a SMR is nonanonymous by Corollary~\ref{anonymous robust}. For example,  a SMR with a casting (tie-breaking) vote is a robust nonanonymous rule. There are two cases to consider. First, assume that the presiding officer with a casting vote is a member of a group of $n$ individuals. This rule is equivalent to a WMR such that the presiding officer's weight is slightly greater than the others' weights.  Next, assume that the presiding officer is not a member of a group of $n$ individuals and that he or she votes only when there is a tie.  This rule is equivalent to 
a WMR with $n+1$ individuals including the presiding officer such that the presiding officer's weight is very close to zero. 
Each of these WMRs does not have ties and is robust

On the other hand, a SMR with any tie-breaking rule is weakly robust by Proposition~\ref{main result}. 
In particular, a SMR with the status quo tie-breaking rule (i.e.,\ the status quo is followed whenever there is a tie)  is a weakly robust anonymous rule, which is not robust when $n$ is even by Corollary~\ref{anonymous robust}. 

In summary, we must be content with a nonanonymous rule 
such as a SMR with a casting vote if we require robustness; we must be content with a weakly robust rule such as a SMR with the status quo tie-breaking rule if we require anonymity. 
A SMR with a casting vote is used by legislatures such as the United States Senate, the Australian House of Representatives, and the National Diet of Japan. A SMR with the status quo tie-breaking rule is used by legislatures such as the New Zealand House of Representatives, the British House of Commons, and the Australian Senate.

\section{Robustness vs.\ efficiency}\label{Robustness vs. efficiency}

Although a $P$-robust rule is not necessarily efficient, any robust rule is efficient because it is a WMR. This section clarifies the relationship between robustness, efficiency, and WMRs.
 
\subsection{Equivalence of robustness and efficiency}\label{equivalence of robustness and efficiency}
A voting rule $\phi \in \Phi$ is said to be strictly efficient under $p \in \Delta(\mathcal{X})$ if no other random voting rule is weakly Pareto-preferred to $\phi$.
The next lemma shows that strict efficiency is equivalent to robustness as a requirement for voting rules.\footnote{This subsection builds on a suggestion by a reviewer.}

\begin{lemma}\label{summary WMRs}
Fix $p^\circ\in \Delta(\mathcal{X})^\circ\equiv \{p\in\Delta(\mathcal{X}): p(x)>0 \text{\em \ for each }x\in\mathcal{X}\}$, where all $x$ is possible.  
The following conditions for $\phi\in \Phi$ are equivalent to each other.

\begin{enumerate}[\em (i)]
\item 
$\phi$ is strictly efficient under $p^\circ$. 
\item $\phi$ is strictly efficient under any $p\in \Delta(\mathcal{X})$.
\item $\phi$ is robust.
\end{enumerate}
\end{lemma}

The equivalence of (i) and (ii) is established by \citet{azrielikim2016}.\footnote{See Propositions 1 and 5 in \citet{azrielikim2016}.} That is, strict efficiency in the set of random voting rules under a fixed probability distribution implies strict efficiency under any probability distribution, which is a stronger requirement than robustness. This implies that a strictly efficient rule is robust.  The converse is also true: a robust rule is strict efficient, as shown in Appendix~\ref{alternative proof}. Therefore, strict efficiency is equivalent to robustness.

\subsection{The Rae-Taylor-Fleurbaey theorem}\label{RTF theorem}

This subsection reviews the characterization of WMRs established by \citet{rae1969}, \citet{taylor1969}, and \citet{fleurbaey2008}, and explains how it relates to our main results.

By Lemma \ref{weighted majority characterization}, 
$\phi\in\Phi$ is a WMR with a weight vector $w\in \mathbb{R}^n$ 
if and only if 
$\phi(x)\sum_{i\in N}w_ix_i=|\phi(x)\sum_{i\in N}w_ix_i|$ for all $x\in\mathcal{X}$, which is equivalent to the following inequality: 
 for all $\phi'\in\Phi$ and  $x\in\mathcal{X}$, 
\begin{equation}
\phi(x)\sum_{i\in N}w_ix_i
=\left|\phi(x)\sum_{i\in N}w_ix_i\right|
=\left|\phi'(x)\sum_{i\in N}w_ix_i\right|
\geq \phi'(x)\sum_{i\in N}w_ix_i. 
\label{weighted majority 0}
\end{equation}
This is true if and only if, 
for all $p\in\Delta(\mathcal{X})$, 
\begin{equation}
\sum_{i\in N}w_iE_p[\phi(x)x_i]=\max_{\phi'\in\Phi} \sum_{i\in N}w_iE_p[\phi'(x)x_i],\label{weighted majority 3}
\end{equation}
or equivalently, 
\begin{equation}
\sum_{i\in N}w_ir_i(\phi,p)=\max_{\phi'\in\Phi} \sum_{i\in N}w_ir_i(\phi',p).\label{Fleurbaey eq}
\end{equation}
That is, a necessary and sufficient condition for a voting rule to be a WMR is that it maximizes the corresponding weighted sum of responsiveness over all voting rules for each $p\in \Delta(\mathcal{X})$. 
This result is summarized in the following proposition due to \citet{fleurbaey2008},\footnote{See also \citet{brighousefleurbaey2010}, who discuss the implication of this result for democracy.} where 
the sufficient condition is weaker.\footnote{To see why a weaker condition suffices, 
suppose that $\phi$ is not a WMR. Then, (\ref{weighted majority 0}) does not hold for some $\phi'\in\Phi$ and  $x\in\mathcal{X}$, which contradicts (\ref{weighted majority 3}) and (\ref{Fleurbaey eq}) for each $p\in \Delta(\mathcal{X})^\circ$.}   
\begin{proposition}\label{Fleurbaey}
If $\phi$ is a WMR with a weight vector $w$, then {\em (\ref{Fleurbaey eq})} holds for each $p\in \Delta(\mathcal{X})$. 
For fixed $p^\circ\in \Delta(\mathcal{X})^\circ$,  if {\em (\ref{Fleurbaey eq})} holds, then 
$\phi$ is a WMR with a weight vector $w$.
 \end{proposition}

We call the above result the Rae-Taylor-Fleurbaey (RTF) theorem because it generalizes the Rae-Taylor theorem\footnote{See references in \citet{fleurbaey2008}.} 
which focuses on a SMR. 
The normative implication of the RTF theorem is efficiency of WMRs,\footnote{This issue is not formally discussed in \citet{fleurbaey2008}. Instead, \citet{fleurbaey2008} considers the optimality of a WMR by assuming that $w_i$ is proportional to $i$'s utility, where the weighted sum of responsiveness is the total sum of expected utilities.} as summarized in Appendix \ref{efficiency and RTF theorem}.
In particular, a voting rule is strictly efficient if and only if it is a WMR without ties. 

Our main results can be viewed as a generalization of the equivalence of strict efficiency and WMRs, given the equivalence of robustness and strict efficiency established in Lemma~\ref{summary WMRs}. This is because $P$-robustness is a weaker requirement than strict efficiency when $P\subsetneq\Delta(\mathcal{X})$, and Proposition~\ref{main result} implies that a voting rule is robust if and only if it is a WMR without ties.

\section{Discussion}\label{section: discussions}

In this section, we discuss two extensions of Definition \ref{def: robust}.

\subsection*{Random voting rules}
\label{random rule remark}
We focus on deterministic voting rules in the definition of $P$-robustness,  but the same requirement can also be applied to random voting rules. 
However, this requirement is very weak in the case of random voting rules,  even when $P = \Delta(\mathcal{X})$; that is, under the strongest version of the requirement. 
To demonstrate this, Appendix~\ref{Robustness is too weak a requirement for random rules} characterizes the class of robust random rules and shows that, for any such rule, there exists a deterministic rule that is Pareto-preferred to it under a certain probability distribution of preferences.

\subsection*{Heterogeneous priors}

We adopt the notion of Pareto-dominance under common beliefs in Definition \ref{def: robust}.
This is because Pareto-dominance under heterogeneous beliefs is not normatively compelling as discussed in the introduction. 
In addition, when we adopt Pareto-dominance under heterogeneous beliefs, robustness implies dictatorship. 
Let us say that a voting rule $\phi\in \Phi$ is robust under heterogeneous priors if, for any  $p_1,\ldots,p_n\in \Delta(\mathcal{X})$, it holds that $r_i(\phi,p_i)>1/2$ for at least one $i\in N$. 
We show that there exists a dictator, i.e., individual $i\in N$ with $\phi(x)=x_i$ for all $x\in \mathcal{X}$. 
Seeking a contradiction, suppose that no individual is a dictator. 
Then, for each $i\in N$, there exists $x^i\in \mathcal{X}$ and $p_i\in\Delta(\mathcal{X})$ 
such that $\phi(x^i)\neq x_i^i$ and $p_i(x^i)=1$; that is, $r_i(\phi,p_i)=0\leq 1/2$. 
Therefore, $\phi$ is not robust under heterogeneous priors. 
On the other hand, 
if individual $i$ is a dictator, then $r_i(\phi,p)=1>1/2$ for all $p\in \Delta(\mathcal{X})$, 
so $\phi$ is robust under heterogeneous priors.

\section{Conclusion}\label{Conclusion}

The justification of WMRs and, in particular, a SMR based on efficiency arguments or axiomatic characterizations has yielded some of the celebrated contributions to the social choice and voting literature. The two paramount examples rationalizing a SMR within a dichotomous setting are Condorcet's jury theorem and May's theorem,\footnote{See \citet{may1952}, \citet{fishburn1973}, and \citet{dasguptamaskin2008}.} where the rationalization of a voting rule is based on asymptotic (i.e., infinite-individual) probabilistic criteria or deterministic criteria. An alternative approach based on non-asymptotic (i.e., finite-individual) probabilistic criteria was pioneered by \citet{rae1969}, who suggested aggregate responsiveness as a meaningful criterion for evaluating the performance of a voting rule in the constitutional stage, namely, where the veil of ignorance prevails.

This paper contributes to the latter literature by introducing the concept of $P$-robustness. The $P$-robustness criterion requires that a voting rule avoid the worst-case scenario in which the true responsiveness of every individual is less than or equal to one-half under every distribution of preferences in $P$. 
The justification for the $P$-robustness property is based on two arguments: (i) it prevents the collective decision from systematically reflecting only minority preferences on average, and (ii) it rules out Pareto inferiority to the inverse rule. In addition, $P$-robustness bypasses the problem of spurious unanimity, which arises under heterogeneous beliefs.

We establish that a voting rule is $P$-robust if and only if there exists a nonnegative weight vector such that the weighted average of individual responsiveness exceeds one-half under every extreme point of $P$. While every strictly efficient rule is $P$-robust for arbitrary $P$, the converse does not hold: a $P$-robust rule need not be efficient. However, when $P$ is the set of all distributions, $P$-robust rules coincide with WMRs without ties, which are strictly efficient. In this case, $P$-robustness and strict efficiency are equivalent---a fact that can be established directly, without relying on the robustness or efficiency properties of WMRs.

Given the equivalence of robustness and strict efficiency, our main result can be viewed as a generalization of the equivalence of strict efficiency and WMRs. We introduce a weaker criterion (i.e., $P$-robustness) and characterize the set of voting rules that satisfy it. This characterization includes, as a special case, the equivalence between strict efficiency (i.e., robustness) and WMRs. In this sense, our analysis extends the efficiency-based justification of WMRs, as formalized in the RTF theorem.

\bigskip
\appendix
\setcounter{section}{0}
\setcounter{theorem}{0}
\setcounter{lemma}{0}
\setcounter{claim}{0}
\setcounter{proposition}{0}
\setcounter{definition}{0}
\renewcommand{\thetheorem}{\Alph{theorem}}
\renewcommand{\thelemma}{\Alph{lemma}}
\renewcommand{\theclaim}{\Alph{claim}}
\renewcommand{\theproposition}{\Alph{proposition}}
\renewcommand{\thedefinition}{\Alph{definition}}

\begin{center}
\Large{{\bf Appendix}}
\end{center}

\section{Robust random voting}\label{Robustness is too weak a requirement for random rules}

In this appendix, we define and characterize a $P$-robust random voting rule with $P=\Delta(\mathcal{X})$, i.e., a robust random rule.  Robustness in this case is a very weak requirement. However, there is no anonymous random voting rule that is robust when the number of individuals is even.

Recall that a random voting rule is a mapping $\bar\phi:\mathcal{X}\to [-1,1]$ assigning a collective decision $1$ with probability $(1+\bar\phi(x))/2$ and $-1$ with probability $(1-\bar\phi(x))/2$ to each $x\in \mathcal{X}$. 
Thus, the conditional responsiveness of individual $i\in N$ given $x\in \mathcal{X}$ is ${(1+\bar\phi(x))}/{2}$ if $x_i=1$ and ${(1-\bar\phi(x))}/{2}$ if $x_i=-1$, which is rewritten as ${(\bar\phi(x)x_i+1)}/{2}$. 
Therefore, when $x\in \mathcal{X}$ is drawn according to $p\in \Delta(\mathcal{X})$, the responsiveness of individual $i\in N$ is calculated as $(E_p[\bar\phi(x)x_i]+1)/2$, which is analogous to (\ref{fundamental 2}).

We define a robust random rule by applying Definition \ref{def: robust} and provide a characterization.

\begin{definition}\label{random robust}
A random voting rule $\bar\phi\in\Delta(\Phi)$ is {\em robust} if, for each  $p\in\Delta(\mathcal{X})$, the responsiveness of at least one individual is strictly greater than one-half, 
or equivalently, $E_p[\bar\phi(x)x_i]> 0$ for at least one $i\in N$. 
	\end{definition}

\begin{proposition}\label{main random result}
A random voting rule $\bar\phi\in \Delta(\Phi)$ 
 is robust if and only if there exists $w\in \mathbb{R}_+^n$ satisfying 
\begin{equation}
\bar\phi(x)\gtrless 0 \ \Leftrightarrow \ \sum_{i\in N}w_ix_i\gtrless 0.	 \label{random WMR}
\end{equation}
\end{proposition}
\begin{proof}
The proof is essentially the same as that of Proposition \ref{main result}. 
We enumerate elements in $\mathcal{X}$ as $\{x^j\}_{j\in M}$, where $M\equiv\{1,\ldots,m\}$ is an index set with $m\equiv 2^n$. 
Consider an $n\times m$ matrix 
$\bar L=[\bar l_{ij}]_{n\times m}=\left[\bar\phi(x^j)x_i^j\right]_{n\times m}$. 
Then, (\ref{random WMR}) is rewritten as 
$\sum_{i\in N} w_i \bar l_{ij}=\sum_{i\in N}w_i\big(\bar\phi(x^j)x_i^j\big)>0$ for each $j\in M$, 
or equivalently, 
$w^\top \bar L\gg 0$.
By Lemma \ref{alternative1}, 
this is true if and only if there does not exist $p=(p_j)_{j\in M}> 0$  such that $\bar Lp\leq 0$, or equivalently, $
\sum_{j\in M} \bar l_{ij}p_j
=\sum_{j\in M}\bar\phi(x^j)x_i^j
p_j\leq 0$ for each $i\in N$; that is, $\bar \phi$ is robust. 
\end{proof}

Not only a WMR 
but also many other random voting rules satisfy  (\ref{random WMR}). 
For example, assume that $n$ is odd and 
let $\bar\phi\in \Delta(\Phi)$ be a random voting rule such 
that the collective decision is the majority's vote with probability $0.5+\varepsilon$, where $0<\varepsilon<1/2$; that is, 
\begin{equation}
	\bar \phi(x)=
	\begin{cases}
		+2\varepsilon &\text{ if } \sum_{i\in N}x_i> 0,\\
		-2\varepsilon &\text{ if } \sum_{i\in N}x_i< 0.
	\end{cases} \notag
\end{equation}
Note that $\bar\phi$ satisfies 
(\ref{random WMR}) with $w_i=1$ for all $i\in N$, and thus it is robust.
However, $\bar\phi$ can be dominated by a deterministic voting rule. 
To see this, let $\phi\in \Phi$ be a SMR, and let $p\in \Delta(\mathcal{X})$ be such that $p(x)=1$ if $x_i=1$ for all $i\in N$ and $p(x)=0$ otherwise. 
Then, for each $i\in N$, $E_p[\bar\phi(x)x_i]-E_p[\phi(x)x_i]=2\varepsilon-1<0$,
which implies that $\phi$ is strictly Pareto-preferred to $\bar\phi$ under $p$ in terms of responsiveness.

Although robustness is a weak requirement for random voting rules, there exists no anonymous robust random rule when $n$ is even. 
Let $p\in \Delta(\mathcal{X})$ be such that 
\[
p(x)
=
\begin{cases}
1/\binom{n}{n/2}	& \text{ if }\#\{i:x_i=1\}=n/2,\\
0 & \text{ otherwise}.
\end{cases}
\]
Then, for any anonymous random voting rule $\bar\phi\in \Delta(\Phi)$, it holds that 
$E_p[\bar\phi(x)x_i]=E_p[\bar\phi(x)x_j]$ for all $i,j\in N$ and 
\[
\sum_{i\in N}E_p[\bar\phi(x)x_i]
=E_p\left[\bar\phi(x)\sum_{i\in N} x_i\right]=0.
\]
This implies that $(E_p[\bar\phi(x)x_i]+1)/2=1/2$ for all $i\in N$; that is, the responsiveness of every individual equals one-half. 
Therefore, $\bar\phi$ is not robust.

\section{$P$-robustness with an arbitrary convex set $P$}
\label{an arbitrary convex set}

We extend Proposition \ref{main result} to the case where $P$ is an arbitrary closed convex set as its corollary.

\begin{corollary}\label{main result cor}
Let $P\subseteq \Delta(\mathcal{X})$ be a closed convex set with the set of extreme points $\ext(P)$. 
Then, the following results hold. 
\begin{enumerate}[\em (1)]
\item A voting rule $\phi\in\Phi$ is $P$-robust if and only if there exists $w\in \mathbb{R}_{+}^n$ such that 
\begin{equation}
\frac{\sum w_i r_i(\phi,p)}{\sum w_i}> 1/2\ \text{ for all }\ p\in \ext(P).
\label{robustness condition w1'}    	
\end{equation}

\item  A voting rule $\phi\in \Phi$ is weakly $P$-robust if and only if there exists $w\in \mathbb{R}_{+}^n$ such that 
\begin{equation}
\frac{\sum w_i r_i(\phi,p)}{\sum w_i}\geq 1/2 \ \text{ for all }\ p\in \ext(P).
\label{robustness condition w2'}    	
\end{equation}
\end{enumerate}
\end{corollary}

\begin{proof}

The ``if'' part of each result is immediate from the definitions of $P$-robustness and weak $P$-robustness. 
Thus, it is enough to prove the ``only if'' part. 

To prove the ``only if'' part of the first result, suppose that $\phi$ is $P$-robust. 
Then,  by definition, 
$\min_{p\in P}\max_{i\in N}r_i(\phi,p)
> 1/2$. 
It is known in convex geometry that there exists a convex polytope $P'\subset  \Delta(\mathcal{X})$ such that $P\subset P'$ and their Hausdorff distance is arbitrarily small.\footnote{See, for example, \cite{bronshteynivanov1975}.}
In particular, we can choose $P'$ that satisfies $\min_{p\in P'}\max_{i\in N}r_i(\phi,p)> 1/2$. 
Since $P'$ is the convex hull of a finite set, Proposition \ref{main result} implies that there exists $w\in \mathbb{R}_{+}^n$ satisfying \eqref{robustness condition w1} for each extreme point of $P'$, which also satisfies \eqref{robustness condition w1'} since $\ext(P)\subset P'$.    	

To prove the ``only if'' part of the second result, suppose that there is no $w\in \mathbb{R}_+^n$ satisfying \eqref{robustness condition w2'}.  
Then, $\max_{w\in \Delta(N)}\min_{p\in P}\sum_{i\in N}w_ir_i(\phi,p) < 1/2$.
It is also known in convex geometry that there exists a convex polytope $P'\subset  \Delta(\mathcal{X})$ such that $P\supset P'$ and their Hausdorff distance is arbitrarily small. 
In particular, we can choose $P'$ that satisfies 
$\max_{w\in \Delta(N)}\min_{p\in P'}\sum_{i\in N}w_ir_i(\phi,p) < 1/2$.
Since $P'$  is the convex hull of a finite set, $\phi$ is not weakly $P'$-robust by Proposition \ref{main result}, which in turn implies that $\phi$ is not weakly $P$-robust since $P'\subset P$. 
\end{proof}

\section{Proof of Lemma \ref{summary WMRs}}\label{alternative proof}

The equivalence of (i) and (ii) is implied by Propositions 1 and 5 in \citet{azrielikim2016}, but for completeness, we provide a full proof below, following a suggestion by a reviewer.

We begin by showing that (i) implies (ii).
Fix $p \in \Delta(\mathcal{X})^{\circ}$ and suppose that $\phi$ is not strictly efficient under $q\in \Delta (\mathcal{X})$.
Then, by definition, there exists a random voting rule $\bar\phi \in \Delta(\Phi)$ such that $\bar\phi$ is weakly Pareto-preferred to $\phi$ under $q$.
For any $x \in \mathcal{X}$, let $r(x)=q(x)/p(x)$, which is well-defined by $p \in \Delta(\mathcal{X})^{\circ}$.
Define a random voting rule $\bar \phi' \in \Delta(\Phi)$ such that $\bar \phi'(x) \equiv \bigl( r(x)/R \bigr) \bar\phi(x)+\bigl(1- r(x)/R \bigr)\phi(x)$ for any $x \in \mathcal{X}$, where $R=\max_{x \in \mathcal{X}} r (x)>0$. 
Then, we can see that
\begin{align*}
E_p[\bar \phi'(x)x_i]-E_p[\phi(x)x_i]&= E_p\bigl[ \bigl( r(x)/R \bigr) \bar\phi(x)x_i+\bigl(1- r(x)/R \bigr)\phi(x)x_i\bigr]-E_p[\phi(x)x_i \bigr]  \\       
&= \frac{1}{R} \Bigl( E_q[ \bar\phi(x)x_i]-E_q[ \phi(x)x_i] \Bigr)\geq 0
\end{align*}
for any $i \in N$, which implies that $\bar \phi'$ is weakly Pareto-preferred to $\phi$ under $p$ because $\bar\phi$ is weakly Pareto-preferred to $\phi$ under $q$.

Next, observe that (ii) implies (iii) by definition.

Finally, we show that (iii) implies (i).
Suppose that there exists a random voting rule $\bar \phi \in \Delta(\Phi)$ and $p \in \Delta(\mathcal{X})$ such that $\bar \phi$ is weakly Pareto-preferred to $\phi$ under $p \in \Delta(\mathcal{X})$. 
Define $q \in \Delta(\mathcal{X})$ by
\[
q(x)=
\frac{p(x)\alpha(x)}{Q},
\]
where $\alpha(x)=(1-\bar\phi(x)/\phi(x))/2\in [0,1]$ for all $x\in \mathcal{X}$ and 
$Q=\sum_{x}p(x)\alpha(x)$.
Then, 
\begin{align*}
E_q[(-\phi)(x)x_i]-E_q[\phi(x)x_i]
&=  \sum_{x}\frac{p(x)\alpha(x)}{Q} \bigl((-\phi)(x)x_i-\phi(x)x_i \bigr)\\
&=  \sum_{x}\frac{p(x)}{Q} \Bigl( (1-2\alpha(x))\phi(x)x_i-\phi(x)x_i \Bigr)\\
&=  \sum_{x}\frac{p(x)}{Q} \Bigl( \bar\phi(x)x_i-\phi(x)x_i \Bigr)\\
&= \bigl( E_p[\bar \phi(x)x_i]-E_p[\phi(x)x_i] \bigr)/Q \ge  0
\end{align*}
for any $i \in N$, which implies that the inverse rule $-\phi$ is weakly Pareto-preferred to $\phi$ under $q$, and hence, $\phi$ is not robust.

\section{The normative implication of the RTF theorem}\label{efficiency and RTF theorem}

To discuss the normative implication of RTF theorem, 
we consider the following notions of efficiency, including strict efficiency discussed in Section \ref{Robustness vs. efficiency}.

\begin{definition}
Fix $p\in \Delta(\mathcal{X})$. 
A deterministic voting rule $\phi\in \Phi$ is efficient  under $p$ if any random voting rule is not Pareto-preferred to $\phi$ in terms of responsiveness under $p$. 
A deterministic voting rule $\phi\in \Phi$ is weakly efficient  under $p$ if any random voting rule is not strictly Pareto-preferred to $\phi$ in terms of responsiveness under $p$. 
\end{definition}

Using the RTF theorem, we can obtain the following characterizations of WMRs. 

\begin{proposition}\label{Fleurbaey 2}
Fix $p^\circ\in \Delta(\mathcal{X})^\circ$. 
A deterministic voting rule is strictly efficient under $p^\circ$ if and only if it is a WMR with nonnegative weights such that there are no ties. 
A deterministic voting rule is efficient under $p^\circ$ if and only if it is a WMR with positive weights. 
A deterministic voting rule is weakly efficient under $p^\circ$ if and only if it is a WMR with nonnegative weights. 
\end{proposition}

Note that a WMR with nonnegative weights allowing no ties can be represented as a WMR with positive weights allowing no ties. This is because the set of all weight vectors representing a WMR $\phi$ allowing no ties is $\{w\in \mathbb{R}^n:\sum_{i\in N} w_i(\phi(x)x_i)>0\text{ for all }x\in \mathcal{X}\}$ by Lemma \ref{weighted majority characterization}, which is an open set.

To the best of the authors' knowledge, the formal proof has never appeared in the literature, so we give the proof based on the key equations of the RTF theorem, (\ref{weighted majority 0}) and (\ref{weighted majority 3}).

\begin{proof}
We first prove the first statement, which asserts that exactly one of the following holds. 
\begin{enumerate}[(a)]
\item 
A voting rule $\phi$ is a WMR with nonnegative weights allowing no ties. By Lemma \ref{weighted majority characterization}, this is true if and only if there exists $w=(w_i)_{i\in N}\geq 0$ such that, for all $\phi'\in\Phi\setminus\{\phi\}$ and  $x\in\mathcal{X}$, (\ref{weighted majority 0}) is true with strict inequality holding for at least one decision profile $x$. This is true if and only if there exists $w=(w_i)_{i\in N}\geq 0$ such that
\begin{equation*}
\sum_{i\in N}w_iE_{p^\circ}[\phi(x)x_i]>\sum_{i\in N}w_iE_{p^\circ}[\phi'(x)x_i] \text{ for all $\phi'\in\Phi\setminus\{ \phi\}$},  
\end{equation*}
or equivalently, 
\begin{equation*}
\sum_{i\in N}w_i(E_{p^\circ}[\phi(x)x_i]-E_{p^\circ}[\phi'(x)x_i])
> 0 \text{ for all $\phi'\in\Phi\setminus\{ \phi\}$}.
\end{equation*}

\item 
There exists a random voting rule that is weakly Pareto-preferred to $\phi$ if and only if there exists $\rho\in \mathbb{R}_+^{\Phi\setminus\{\phi\}}$ such that 
\[
E_{p^\circ}[\phi(x)x_i]\leq 
\sum_{\phi'\in \Phi\setminus\{\phi\}}E_{p^\circ}[\phi'(x)x_i]\rho(\phi')/\sum_{\phi'\in \Phi\setminus\{\phi\}}\rho(\phi')\text{ for all $i\in N$}, 
\]
or equivalently, 
\[
\sum_{\phi'\in \Phi\setminus\{\phi\}}(E_{p^\circ}[\phi(x)x_i]-E_{p^\circ}[\phi'(x)x_i])\rho(\phi')\leq 0 
\text{ for all $i\in N$}.
\]
\end{enumerate}
By Lemma \ref{alternative1}, exactly one of 
(a) and (b) holds. 

The second and third statements are implied by the well-known theorem of \citet{wald1950} on admissible decision functions (or that of \citet{pearce1984} on undominated strategies). Proposition~\ref{Fleurbaey} states that a voting rule is a WMR with a weight vector $w$ if and only if (\ref{weighted majority 3}) holds. Mathematically, (\ref{weighted majority 3}) is equivalent to expected utility maximization, where $\Phi$ is the set of actions, $N$ is the set of states, and $w_i/\sum_j w_j$ is a probability of state $i\in N$.  Therefore, we can apply the theorem of \citet{wald1950} on admissible decision functions. In particular, Theorems 5.2.1 and 5.2.5 in \citet{blackwellgirshick1954} are useful. Theorem 5.2.1 implies that a voting rule is weakly efficient if and only if there exists a weight vector $w>0$ such that (\ref{weighted majority 3}) holds. Theorem 5.2.5 implies that a voting rule is efficient if and only if there exists a weight vector $w\gg 0$ such that (\ref{weighted majority 3}) holds. Therefore, this proposition holds by Proposition~\ref{Fleurbaey}.
\end{proof}

\section*{Declarations}

\subsection*{Funding}
Ui and Nakada were supported by MEXT, Grant-in-Aid for Scientific Research.

\subsection*{Conflict of interest}
The authors declare that they have no conflict of interest.

\end{document}


\title{Online Appendix to\\
 ``Robust Voting under Uncertainty''}

\author{Satoshi Nakada\thanks{School of Management, Department of Business Economics, Tokyo University of Science.}\and  Shmuel Nitzan\thanks{Department of Economics, Bar-Ilan University.}\and Takashi Ui\thanks{Hitotsubashi Institute for Advanced Study, Hitotsubashi University; oui@econ.hit-u.ac.jp.}}

\date{July 2025}
\maketitle

\setcounter{section}{0}
\setcounter{theorem}{0}
\setcounter{lemma}{0}
\setcounter{claim}{0}
\setcounter{proposition}{0}
\setcounter{definition}{0}
\renewcommand{\thetheorem}{\Alph{theorem}}
\renewcommand{\thelemma}{\Alph{lemma}}
\renewcommand{\theclaim}{\Alph{claim}}
\renewcommand{\theproposition}{\Alph{proposition}}
\renewcommand{\thedefinition}{\Alph{definition}}

\renewcommand{\theequation}{\thesection.\arabic{equation}}
\newcommand{\red}{\color{red}}
\newcommand{\black}{\color{black}}
\newcommand{\blue}{\color{blue}}

In the main text, we introduce the concept of $P$-robustness by assuming that an individual votes sincerely and prefers a voting rule with higher responsiveness.
We provide justification for these assumptions by discussing two alternative definitions of $P$-robustness in this online appendix, mainly focusing on the case of $P=\Delta(\mathcal{X})$. 
First, we redefine robustness as a requirement for a voting mechanism, where individuals can vote insincerely, and show that sincere voting is weakly dominant for every individual if the mechanism satisfies robustness. 
Next, we redefine robustness using the expected utility of individuals (rather than responsiveness) and provide justification for the original definition, where we consider a set of probability distributions over individuals' utility functions.

\appendix

\section{Robustness as a requirement for a mechanism}\label{Robustness as a requirement for a mechanism}
We regard $\phi\in \Phi$ as a  mechanism with a message space $\mathcal{X}$.
Individual $i\in N$ has a VNM utility function $u_i:\{-1,1\}\to \mathbb{R}$ over the set of alternatives $\{-1,1\}$ representing a strict preference relation, i.e., $u_i(1)\neq u_i(-1)$. 
The preferred alternative of individual $i$ is denoted by $\delta u_i\in \{-1,1\}$, i.e.,   
\[
\delta u_i\equiv 
\begin{cases}
1 & \text{ if } u_i(1)>u_i(-1),\\
-1 & \text{ if } u_i(-1)>u_i(1). 
\end{cases}
\]
Let $\mathcal{U}_i\equiv \{u_i: u_i(1)\neq u_i(-1)\}$ be the set of all utility functions of individual $i$ representing strict preferences. 
We write $\mathcal{U}\equiv \prod_{i\in N}\mathcal{U}_i$. 


Individual $i\in N$ with a voting strategy $\sigma_i:\mathcal{U}_i\to \{-1,1\}$ casts a vote for $\sigma_i(u_i)\in \{-1,1\}$. 
A sincere voting strategy is $\delta_i:\mathcal{U}_i\to\{-1,1\}$ satisfying $\delta_i(u_i)=\delta u_i$ for all $u_i\in \mathcal{U}_i$. 
Let $\Sigma_i$ be the set of all strategies of individual $i$. 
We write $\Sigma=\prod_{i\in N}\Sigma_i$.

Assume that $u\equiv (u_i)_{i\in N}\in \mathcal{U}$ is randomly drawn according to a probability distribution $\lambda\in \Delta(\mathcal{U})$, where $\Delta(\mathcal{U})$ is the set of all probability distributions over $\mathcal{U}$.  
When the individuals follow a strategy profile $\sigma\equiv (\sigma_i)_{i\in N}\in \Sigma$ under a mechanism $\phi$, the probability distribution of a decision profile $\sigma(u)\equiv (\sigma_i(u_i))_{i\in N}$ is $\lambda\circ \sigma^{-1}\in \Delta(\mathcal{X})$ with 
\[
\lambda\circ \sigma^{-1}(x)\equiv 
\lambda(\{u\in \mathcal{U}:\sigma(u)=x\}),  
\]
and the responsiveness of individual $i\in N$ is 
\begin{equation}
\lambda\circ\sigma^{-1}(\phi(x)=x_i)
\equiv 
\lambda\circ\sigma^{-1}(\{x\in \mathcal{X}:\phi(x)=x_i\})
=\lambda(\{u\in \mathcal{U}:\phi(\sigma(u))=\sigma_i(u_i)\}). \label{reported responsiveness}	
\end{equation}

We define robustness of a voting mechanism using the responsiveness (\ref{reported responsiveness}) generated by the reported messages $\sigma(u)$, where the true probability distribution $\lambda \in \Delta(\mathcal{U})$ and the  strategy profile $\sigma\in \Sigma$ followed by the individuals are assumed to be unknown. 

\begin{definition}\label{def: robust mechanism}
A voting rule $\phi\in\Phi$ is {\em robust as a mechanism} if, for each $\lambda\in\Delta(\mathcal{U})$ and each strategy profile $\sigma\in \Sigma$, $\lambda\circ\sigma^{-1}(\phi(x)=x_i)> 1/2$ for at least one $i\in N$. 
\end{definition}

If $\phi\in\Phi$ is robust as a mechanism, then it is robust in the original sense. 
This is because $\phi$ is robust if and only if, for each $\lambda\in\Delta(\mathcal{U})$, the responsiveness (\ref{reported responsiveness}) is strictly greater than one-half for at least one individual when every player adopts a sincere voting strategy. 
However, it is straightforward to see that both concepts are equivalent.

\begin{lemma}\label{lemma: robust mechanism}
A voting rule $\phi\in\Phi$ is robust as a mechanism if and only if it is robust. 
\end{lemma}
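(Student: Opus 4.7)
The plan is to prove the two directions separately. The ``only if'' direction is immediate from the characterization given just before the lemma: if $\phi$ is robust as a mechanism, then specializing $\sigma$ to the sincere profile $\delta=(\delta_i)_{i\in N}$ shows that for every $\lambda\in\Delta(\mathcal{U})$ the responsiveness $\lambda\circ\delta^{-1}(\phi(x)=x_i)$ exceeds $1/2$ for some $i$, which is the reformulation of the original robustness quoted in the excerpt.

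For the ``if'' direction, the key observation is that any distribution on reported profiles can be mimicked by sincere play under a suitable prior. Concretely, fix $\lambda\in\Delta(\mathcal{U})$ and $\sigma\in\Sigma$, and let $\mu\equiv \lambda\circ\sigma^{-1}\in\Delta(\mathcal{X})$. For each $x\in\mathcal{X}$ pick a utility profile $u^{x}\in\mathcal{U}$ whose sincere preference profile is exactly $x$, e.g.\ $u^{x}_{i}(x_{i})=1$ and $u^{x}_{i}(-x_{i})=0$ for all $i\in N$; then the profiles $\{u^{x}\}_{x\in\mathcal{X}}$ are distinct and satisfy $\delta u^{x}=x$. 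Define $\lambda'\in\Delta(\mathcal{U})$ as the finitely supported distribution assigning mass $\mu(x)$ to $u^{x}$ for each $x\in\mathcal{X}$.

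A direct check then gives $\lambda'\circ\delta^{-1}=\mu$, since for any $x\in\mathcal{X}$,
\[
\lambda'\circ\delta^{-1}(x)=\lambda'(\{u\in\mathcal{U}:\delta u=x\})=\lambda'(\{u^{x}\})=\mu(x).
\]
Applying the (reformulated) original robustness to the distribution $\lambda'$ under sincere voting yields some $i\in N$ with $\lambda'\circ\delta^{-1}(\phi(x)=x_i)>1/2$; by the identity just displayed this is $\mu(\phi(x)=x_i)=\lambda\circ\sigma^{-1}(\phi(x)=x_i)>1/2$, so $\phi$ is robust as a mechanism.

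There is not really a hard step: the only non-routine move is the observation that any $\mu\in\Delta(\mathcal{X})$ is realized by sincere voting under some $\lambda'\in\Delta(\mathcal{U})$, which reduces the strategic condition to the sincere one. The rest is bookkeeping. I would keep the write-up short and emphasize the construction of $\lambda'$, since that is where the content of the lemma lies.
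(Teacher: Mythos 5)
Your proof is correct and follows essentially the same route as the paper: the paper's entire proof is the set identity $\Delta(\mathcal{X})=\bigcup_{\sigma\in \Sigma}\{\lambda\circ\sigma^{-1}:\lambda\in \Delta(\mathcal{U})\}$, whose nontrivial inclusion is exactly your construction of $\lambda'$ realizing an arbitrary $\mu\in\Delta(\mathcal{X})$ via sincere voting. You have simply made explicit the details the paper leaves implicit.
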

\begin{proof}
This lemma follows from $
\Delta(\mathcal{X})=\bigcup_{\sigma\in \Sigma}\{\lambda\circ\sigma^{-1}\in \Delta(\mathcal{X}):\lambda\in \Delta(\mathcal{U})\}$.
\end{proof}


Moreover, if $\phi$ is robust as a mechanism, then a sincere voting strategy is weakly dominant under $\phi$; that is, $\phi$ satisfies strategy-proofness. 

\begin{lemma}\label{sincere voting lemma}
If $\phi\in \Phi$ is robust as a mechanism, then, for each individual $i\in N$, the sincere voting strategy $\delta_i$ is a best response to any strategy profile of the other individuals.  
\end{lemma}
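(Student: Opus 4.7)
The plan is to reduce the weak dominance of $\delta_i$ to a pointwise monotonicity property of $\phi$ and then deduce that property from robustness through Lemma \ref{lemma: robust mechanism}. The key claim is that for every $j \in N$ and every $x_{-j} \in \mathcal{X}_{-j}$ one has $\phi(1, x_{-j}) \geq \phi(-1, x_{-j})$; equivalently, the configuration $\phi(1, x_{-j}) = -1$ together with $\phi(-1, x_{-j}) = 1$ never arises.

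Assuming monotonicity, the conclusion follows easily. Since $\sigma_i$ can be chosen independently at each realization of $u_i$, it suffices to fix $i \in N$ and a realization $u_i$ with, say, $\delta u_i = 1$ (the case $\delta u_i = -1$ is symmetric), and to show that voting $1$ weakly dominates voting $-1$ against any strategy profile $\sigma_{-i}$ and any distribution of $u_{-i}$. Let $\mu \in \Delta(\mathcal{X}_{-i})$ denote the law of $y \equiv \sigma_{-i}(u_{-i})$. Individual $i$'s expected payoff from casting vote $x_i$ is
\[
u_i(1)\, \Pr\nolimits_\mu[\phi(x_i, y) = 1] + u_i(-1)\, \Pr\nolimits_\mu[\phi(x_i, y) = -1],
\]
which, since $u_i(1) > u_i(-1)$, is strictly increasing in $\Pr_\mu[\phi(x_i, y) = 1]$. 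Monotonicity yields the pointwise inclusion $\{y : \phi(-1, y) = 1\} \subseteq \{y : \phi(1, y) = 1\}$, so voting $1$ achieves at least as large a probability of outcome $1$ as voting $-1$, establishing the dominance.

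To prove monotonicity, I would argue by contradiction. Suppose $\phi(1, x_{-j}^*) = -1$ and $\phi(-1, x_{-j}^*) = 1$ for some $j \in N$ and $x_{-j}^* \in \mathcal{X}_{-j}$, and let $\nu \in \Delta(\mathcal{X})$ put mass $1/2$ on each of $(1, x_{-j}^*)$ and $(-1, x_{-j}^*)$. At both atoms the outcome $\phi(x)$ fails to match $x_j$, so $\nu(\phi(x) = x_j) = 0$. For any $k \neq j$, the coordinate $x_k$ is constantly $x_k^*$ while $\phi(x)$ takes the two values $\pm 1$ with equal weight, so $\nu(\phi(x) = x_k) = 1/2$. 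Hence no individual attains responsiveness strictly greater than $1/2$ under $\nu$, contradicting the original notion of robustness and therefore, by Lemma \ref{lemma: robust mechanism}, the hypothesis that $\phi$ is robust as a mechanism. The main hurdle is spotting this two-point ``swap'' distribution, which simultaneously annihilates $j$'s responsiveness and collapses every other $k$'s responsiveness to exactly $1/2$; once the anti-monotonic pattern is ruled out, the payoff-dominance argument is routine.
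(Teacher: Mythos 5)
Your proposal is correct and follows essentially the same route as the paper: both reduce the claim to ruling out the ``anti-monotone'' configuration $\phi(1,x_{-j})=-1$, $\phi(-1,x_{-j})=1$, and both refute it with the same two-point distribution placing mass $1/2$ on $(1,x_{-j})$ and $(-1,x_{-j})$, which gives responsiveness $0$ for $j$ and $1/2$ for everyone else, contradicting robustness via Lemma~\ref{lemma: robust mechanism}. The only difference is that you spell out the routine expected-payoff step that the paper compresses into its opening sentence.
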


\begin{proof}
It is enough to show that, 
for any decision profile of the other individuals $x_{-i}\equiv (x_j)_{j\neq i}$, it holds that 
$u_i(\phi(\delta u_i,x_{-i}))\geq u_i(\phi(-\delta u_i,x_{-i}))$ for all $i\in N$ and $u_i\in \mathcal{U}_i$. 
Seeking a contradiction, suppose otherwise. 
That is, there exist $i^*\in N$, $u_{i^*} \in \mathcal{U}_{i^*}$, and $z_{-i^*}\equiv (z_j)_{j\neq i^*}$ such that $\phi(\delta u_{i^*},z_{-i^*})=-\delta u_{i^*}$ and $\phi(-\delta u_{i^*},z_{-i^*})=\delta u_{i^*}$.
Without loss of generality, we assume that $\delta u_{i^*}=1$.
Let $p\in \Delta(\mathcal{X})$ be such that 
$p(1,z_{-i^*})=p(-1,z_{-i^*})=1/2$. 
Then, 
$r_{i^*}(\phi,p)=0$ and 
$r_i(\phi,p)=1/2$ if $i\neq i^*$. 
This implies that $\phi$ is not robust as a mechanism by Lemma \ref{lemma: robust mechanism}.
\end{proof}

In summary, we can give an equivalent definition of robustness without the assumption of sincere voting and show that a sincere voting strategy is weakly dominant under a robust rule, thus providing justification for the assumption of sincere voting in the main text.

\section{Robustness in terms of  expected utility}\label{Robustness in terms of  expected utility}

Under the assumption of sincere voting justified in Section \ref{Robustness as a requirement for a mechanism},  we redefine robustness using the expected utility of individuals and provide justification for the use of responsiveness in the original definition of robustness.

Individual $i\in N$ with a utility function $u_i\in \mathcal{U}_i$ votes for the preferred alternative $\delta u_i\in \{-1,1\}$, so the resulting decision profile is $\delta u=(\delta u_i)_{i\in N} \in \mathcal{X}$. 
When $\lambda\in \Delta(\mathcal{U})$ is the true probability distribution, individual $i\in N$ is assumed to prefer a voting rule $\phi$ to another voting rule $\phi'$ if and only if $E_\lambda[u_i(\phi(\delta u))]\geq E_\lambda[u_i(\phi'(\delta u))]$, where $E_\lambda$ is the expectation operator with respect to $\lambda$. 
Thus, we consider the following Pareto order in terms of expected utility under $\lambda$: 
we say that $\phi$ is weakly Pareto-preferred to $\phi'$ (in terms of expected utility)  under $\lambda$ if $E_\lambda[u_i(\phi(\delta u))]\geq E_\lambda[u_i(\phi'(\delta u))]$ for all $i\in N$. 
A special case is the Pareto order in terms of responsiveness because the responsiveness equals  the expected utility if the utility level is either zero or one. 
To explain it more formally, denote the set of such utility functions and that of the corresponding profiles of utility functions  
by 
\[
\mathcal{U}_i^*=\{u_i \in \mathcal{U}_i: u_i(1),u_i(-1) \in \{0,1\}\}\text{ and } \mathcal{U}^*\equiv \prod_{i\in N}\mathcal{U}_i^*,
\] 
respectively. 
Then, for $p\in \Delta(\mathcal{X})$ and $\lambda_p\in \Delta(\mathcal{U})$ such that $\lambda_p(u)=p(\delta u)$ if $u\in \mathcal{U}^*$ and $\lambda_p(u)=0$ otherwise, 
$\phi$ is Pareto-preferred to $\phi'$ in terms of expected utility under $\lambda_p$ if and only if $\phi$ is Pareto-preferred to $\phi'$ in terms of responsiveness under $p$  because $E_{\lambda_p}[u_i(\phi(\delta u))]=r_i(\phi,p)$.

We consider an extension of robustness using the Pareto order in terms of expected utility, where the true probability distribution $\lambda \in \Delta(\mathcal{U})$ is assumed to be unknown but known to be contained in $\Lambda\subseteq \Delta(\mathcal{U})$.

\begin{definition}\label{def: lambda robust}
For $\Lambda\subseteq \Delta(\mathcal{U})$, a voting rule $\phi\in \Phi$ is {\em $\Lambda$-robust} if the inverse rule $-\phi$ is not weakly Pareto-preferred to $\phi$ in terms of expected utility under any $\lambda \in \Lambda$. 
\end{definition}



By the above discussion, $\phi\in \Phi$ is robust if and only if it is $\Lambda^*$-robust with 
\begin{equation*}
\Lambda^*=\{\lambda\in \Delta(\mathcal{U}): \lambda(u)>0 \text{ implies }u\in  \mathcal{U}^*\}
=\{\lambda_p\in \Delta(\mathcal{U}): p\in \Delta(\mathcal{X})\}.
\end{equation*}
In addition, for each $\Lambda\subset \Lambda^*$, 
there exists $P\subseteq\Delta(\mathcal{X})$ such that $\Lambda=\{\lambda_p\in \Delta(\mathcal{U}):p\in P\}$, and thus $\Lambda$-robustness is equivalent to $P$-robustness.

The underlying assumption in $\Lambda$-robustness with $\Lambda\subseteq \Lambda^*$ is that the utility levels are known to be either zero or one. 
However, even if  $\Lambda\supsetneq \Lambda^*$ and the utility levels are unknown under $\Lambda$, $\Lambda$-robustness can be equivalent to robustness.

To demonstrate it, consider 
\[
\Gamma^*=\{\lambda\in \Delta(\mathcal{U}):\frac{E_\lambda[u_i(\delta u_i)-u_i(-\delta u_i)|\delta u=x]}{E_\lambda[u_i(\delta u_i)-u_i(-\delta u_i)|\delta u=x']}= 1\text{ for all }x,x'\in \mathcal{X}\text{ and }i\in N\}
\]
and assume that the true probability distribution is unknown but known to be contained in $\Gamma^*$. 
Note that $E_\lambda[u_i(\delta u_i)-u_i(-\delta u_i)|\delta u=x]$ is the conditional expected net gain of individual $i\in N$ induced by the switch from an unfavorable alternative $-\delta u_i$ 
to a favorable alternative $\delta u_i$ 
given $\delta u=x$. 
Thus, $\Gamma^*$ is the set of all probability distributions over $\mathcal{U}$ such that the conditional expected net gain is the same for all $x\in \mathcal{X}$, and in particular, it holds that $\Lambda^*\subset \Gamma^*$. 
Note that the conditional expected net gain is unknown because it depends upon the unknown probability distribution $\lambda$, which also implies that the utility levels are unknown. 
However, $\Gamma^*$-robustness is equivalent to robustness. 

\begin{lemma}\label{lemma gamma}
A voting rule $\phi\in\Phi$ is $\Gamma^*$-robust if and only if it is robust.	
\end{lemma}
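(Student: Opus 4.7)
My plan is to split into the two implications and rely on the inclusion $\Lambda^* \subset \Gamma^*$ noted in the discussion preceding the lemma. The direction ``$\Gamma^*$-robust implies robust'' is essentially free: since $\Lambda^* \subset \Gamma^*$, $\Gamma^*$-robustness is formally stronger than $\Lambda^*$-robustness, and the excerpt has already identified $\Lambda^*$-robustness with robustness.

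For the converse direction I would argue by contrapositive. Suppose some $\lambda \in \Gamma^*$ makes $-\phi$ weakly Pareto-preferred to $\phi$ in expected utility, i.e., $E_\lambda[u_i(\phi(\delta u)) - u_i(-\phi(\delta u))] \leq 0$ for every $i \in N$. The target is to produce $p \in \Delta(\mathcal{X})$ with $r_i(\phi,p) \leq 1/2$ for every $i$, which is the failure of robustness. I would take $p(x) \equiv \lambda(\{u : \delta u = x\})$, condition on $\{\delta u = x\}$ inside the expectation, and use the pointwise identity
\[
u_i(\phi(\delta u)) - u_i(-\phi(\delta u)) = \bigl(u_i(\delta u_i) - u_i(-\delta u_i)\bigr)\bigl(2\mathbf{1}\{\phi(x) = x_i\} - 1\bigr)
\]
on the event $\{\delta u = x\}$, valid because $\delta u_i = x_i$ there. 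The indicator is $\delta u$-measurable and pulls outside the conditional expectation; the defining property of $\Gamma^*$ then says the remaining factor $E_\lambda[u_i(\delta u_i) - u_i(-\delta u_i) \mid \delta u = x]$ is a constant $c_i$ independent of $x$, and the pointwise strict inequality $u_i(\delta u_i) > u_i(-\delta u_i)$ gives $c_i > 0$. Summing over $x$ against $p$ yields
\[
E_\lambda[u_i(\phi(\delta u)) - u_i(-\phi(\delta u))] = c_i\bigl(2 r_i(\phi,p) - 1\bigr),
\]
and dividing by $c_i > 0$ delivers $r_i(\phi,p) \leq 1/2$ for every $i$, i.e., $-\phi$ is weakly Pareto-preferred to $\phi$ in terms of responsiveness under $p$, so $\phi$ fails to be robust.

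The main obstacle will be the bookkeeping around the conditional expectation. I must justify pulling the indicator out of $E_\lambda[\,\cdot \mid \delta u = x]$, confirm that $c_i$ is strictly positive so that dividing does not flip the inequality, and handle values of $x$ with $\lambda(\delta u = x) = 0$, whose terms contribute zero to the sum, so that the standard ambiguity in defining the conditional expectation on null sets does not affect the argument.
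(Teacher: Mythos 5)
Your proposal is correct and follows essentially the same route as the paper: both define $p(x)=\lambda(\{u:\delta u=x\})$, condition on $\{\delta u=x\}$, use the defining property of $\Gamma^*$ to extract a constant $c_i=E_\lambda[u_i(\delta u_i)-u_i(-\delta u_i)\mid \delta u=x]>0$, and thereby reduce the expected-utility Pareto comparison to the responsiveness comparison under $p$ (the paper writes $E_\lambda[u_i(\phi(\delta u))]=c_i\,r_i(\phi,p_\lambda)+E_\lambda[u_i(-\delta u_i)]$, which is your identity before taking the difference with $-\phi$). Your explicit handling of the easy direction via $\Lambda^*\subset\Gamma^*$ and of the null events $\{\delta u=x\}$ is slightly more careful bookkeeping than the paper provides, but the argument is the same.
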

\begin{proof}
For each $\lambda\in\Delta(\mathcal{U})$, let $p_\lambda\in \Delta(\mathcal{X})$ be the probability distribution of the  decision profile $\delta u$; that is, $p_\lambda(x)=\lambda(\{u\in \mathcal{U}:\delta u=x\})$ for all $x\in \mathcal{X}$. 
Then, 
the expected utility of individual $i\in N$ under $\phi\in \Phi$ is 
\begin{align*}
E_\lambda[u_i(\phi(\delta u))]
& =\sum_{x:\phi(x)=x_i}E_\lambda[u_i(\delta u_i)|\delta u=x]p_\lambda(x)
+\sum_{x:\phi(x)=- x_i}E_\lambda[u_i(-\delta u_i)|\delta u=x]p_\lambda(x)\\
&=\sum_{x:\phi(x)=x_i}E_\lambda[u_i(\delta u_i)-u_i(-\delta u_i)|\delta u=x]p_\lambda(x)
+\sum_{x\in \mathcal{X}}E_\lambda[u_i(-\delta u_i)|\delta u=x]p_\lambda(x)\\
& =c_i 
r_i(\phi,p_\lambda)+
E_\lambda[u_i(-\delta u_i)],
\end{align*}
where $c_i=E_\lambda[u_i(\delta u_i)-u_i(-\delta u_i)|\delta u=x]>0$,  which is independent of $x\in \mathcal{X}$. 
Thus, $E_\lambda[u_i(\phi(\delta u))]\geq E_\lambda[u_i(\phi'(\delta u))]$ if and only if 
$r_i(\phi,p_\lambda)\geq r_i(\phi',p_\lambda)$. 
This implies that 
$\Gamma^*$-robustness is equivalent to robustness. 
\end{proof}

Note that $\Lambda$-robustness with $\Lambda \supset \Gamma^*$ is a stronger requirement than robustness. 
However, if $\Lambda$ is sufficiently close to $\Gamma^*$, $\Lambda$-robustness remains equivalent to robustness. 
To see this, for $\varepsilon> 0$, let 
\[
\Gamma(\varepsilon)=\{\lambda\in \Delta(\mathcal{U}):\frac{E_\lambda[u_i(\delta u_i)-u_i(-\delta u_i)|\delta u=x]}{E_\lambda[u_i(\delta u_i)-u_i(-\delta u_i)|\delta u=x']}< 1+\varepsilon\text{ for all }x,x'\in \mathcal{X}\text{ and }i\in N\}.
\]
Note that $\Gamma^*\subsetneq \Gamma(\varepsilon)$ and $\Gamma^*=\bigcap_{\varepsilon>0}\Gamma(\varepsilon)$. 
Because $\Gamma^*$-robustness is equivalent to robustness, $\Gamma(\varepsilon)$-robustness is a stronger requirement than robustness. 
However, if $\varepsilon>0$ is sufficiently small, $\Gamma(\varepsilon)$-robustness is equivalent to robustness as shown by the next lemma, which generalizes Lemma~\ref{lemma gamma}.

\begin{proposition}\label{lemma lambda 1}
Let 
\[
\underline\varepsilon=
\min_{\phi\in \Phi:\text{\em $\phi$ is robust}}\, 
\min_{p\in \Delta(\mathcal{X})}\max_{i\in N} \frac{2r_i(\phi,p)-1}{1-r_i(\phi,p)},
\]
which is strictly positive by the definition of robustness. 
Then, for any $\Lambda\subset\Delta(\mathcal{U})$ with $\Lambda^*\subseteq \Lambda \subseteq \Gamma(\underline\varepsilon)$, 
a voting rule $\phi\in\Phi$ is $\Lambda$-robust if and only if it is robust.	
\end{proposition}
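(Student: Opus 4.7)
The plan is to prove the two directions of the equivalence separately. The easy direction (``$\Lambda$-robust implies robust'') follows from $\Lambda \supseteq \Lambda^*$: $\Lambda$-robustness implies $\Lambda^*$-robustness, which by the discussion preceding Lemma~\ref{lemma gamma} is the same as the original notion of robustness.

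For the converse direction, fix a robust $\phi$ and an arbitrary $\lambda\in \Gamma(\underline\varepsilon)$; the goal is to exhibit some $i^*\in N$ for which $E_\lambda[u_{i^*}(\phi(\delta u))]>E_\lambda[u_{i^*}(-\phi(\delta u))]$, so that $-\phi$ is not weakly Pareto-preferred to $\phi$. Following the decomposition used in the proof of Lemma~\ref{lemma gamma}, but without invoking constancy of the conditional gain, I would write, for $g_i(x)\equiv E_\lambda[u_i(\delta u_i)-u_i(-\delta u_i)\mid \delta u=x]>0$ and $p_\lambda(x)\equiv \lambda(\{u\in\mathcal{U}:\delta u=x\})$,
\[
E_\lambda[u_i(\phi(\delta u))]-E_\lambda[u_i(-\phi(\delta u))]=\sum_{x:\phi(x)=x_i}g_i(x)p_\lambda(x)-\sum_{x:\phi(x)=-x_i}g_i(x)p_\lambda(x).
\]

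The central step is to exploit the defining ratio bound of $\Gamma(\underline\varepsilon)$. Set $c_i\equiv \min_x g_i(x)>0$; then $g_i(x)<(1+\underline\varepsilon)c_i$ for every $x$ in the support of $p_\lambda$. Bounding the first sum from below by $c_ir_i(\phi,p_\lambda)$ and the second sum strictly from above by $(1+\underline\varepsilon)c_i(1-r_i(\phi,p_\lambda))$ (when $r_i(\phi,p_\lambda)<1$) yields, after factoring,
\[
E_\lambda[u_i(\phi(\delta u))]-E_\lambda[u_i(-\phi(\delta u))] > c_i\bigl(1-r_i(\phi,p_\lambda)\bigr)\left[\frac{2r_i(\phi,p_\lambda)-1}{1-r_i(\phi,p_\lambda)}-\underline\varepsilon\right].
\]
Since $\phi$ is robust and $p_\lambda\in \Delta(\mathcal{X})$, the very definition of $\underline\varepsilon$ supplies an index $i^*$ for which the bracketed expression is nonnegative, so the gap is strictly positive for $i=i^*$. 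The boundary case $r_{i^*}(\phi,p_\lambda)=1$ is handled separately: the second sum then vanishes and the gap is at least $c_{i^*}>0$.

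The main obstacle is honest bookkeeping of the strict versus weak inequalities. The ratio bound defining $\Gamma(\underline\varepsilon)$ is strict ($<1+\underline\varepsilon$), and exactly this strictness is needed to close the argument in the worst case where $\frac{2r_{i^*}(\phi,p_\lambda)-1}{1-r_{i^*}(\phi,p_\lambda)}$ equals $\underline\varepsilon$; the degenerate case $r_{i^*}(\phi,p_\lambda)=1$ must be treated directly, as noted. A routine side task is checking $\underline\varepsilon>0$, which follows because $\Phi$ is finite, $\Delta(\mathcal{X})$ is compact, $r_i(\phi,\cdot)$ is continuous in $p$, and robustness forces $\max_i \frac{2r_i(\phi,p)-1}{1-r_i(\phi,p)}$ to be strictly positive (possibly $+\infty$) at every $p$.
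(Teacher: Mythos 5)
Your proposal is correct and follows essentially the same route as the paper's proof: decompose $E_\lambda[u_i(\phi(\delta u))-u_i(-\phi(\delta u))]$ over the events $\phi(x)=x_i$ and $\phi(x)=-x_i$, bound the conditional gains by their minimum $c_i$ (the paper's $\underline{D}_\lambda$) and by $(1+\underline\varepsilon)c_i$ via the defining ratio bound of $\Gamma(\underline\varepsilon)$, and invoke the definition of $\underline\varepsilon$ at $p_\lambda$ to find an individual making the resulting expression positive. Your explicit separate treatment of the boundary case $r_{i^*}(\phi,p_\lambda)=1$ is in fact slightly more careful than the paper, whose strict inequality step implicitly assumes $1-r_i(\phi,p_\lambda)>0$.
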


\begin{proof}
It is enough to show that robustness implies $\Gamma(\underline\varepsilon)$-robustness. 
Let $\phi\in \Phi$ be a robust rule. 
Fix arbitrary $\lambda \in \Gamma(\underline \varepsilon)$, and let $i\in N$ be the individual who has the maximum responsiveness under $\lambda$.  
Define $\overline{D}_{\lambda} =\max_{x\in \mathcal{X}}E_\lambda[u_i(\delta u_i)-u_i(-\delta u_i)|\delta u=x]$ and $\underline{D}_{\lambda} = \min_{x\in \mathcal{X}}E_\lambda[u_i(\delta u_i)-u_i(-\delta u_i)|\delta u=x]$. 
Note that $\overline{D}_{\lambda}\geq \underline{D}_{\lambda}>0$ and $\overline{D}_{\lambda}/\underline{D}_{\lambda}< 1+\underline{\varepsilon}$ because $\lambda\in\Gamma(\underline{\varepsilon})$. 
Then, 
\begin{align*}
&E_\lambda[u_i(\phi(\delta u))-u_i(-\phi(\delta u))]\\
&= \sum_{x: \phi(x)=x_i}	E_\lambda[u_i(\delta u_i)-u_i(-\delta u_i)|\delta u=x]p_\lambda(x)+
\sum_{x: \phi(x)=- x_i}	E_\lambda[u_i(-\delta u_i)-u_i(\delta u_i)|\delta u=x]p_\lambda(x)\\
&\geq \underline{D}_{\lambda} \sum_{x: \phi(x)=x_i}p_\lambda(x)-\overline{D}_{\lambda}\sum_{x: \phi(x)=- x_i}p_\lambda(x)\\
&> \underline{D}_{\lambda} \Big(
r_i(\phi,p_\lambda)
-(1+\underline\varepsilon)
(1-r_i(\phi,p_\lambda))
\Big)\\
& = \underline{D}_{\lambda} \Big(2
r_i(\phi,p_\lambda)
-1-\underline\varepsilon(1-
r_i(\phi,p_\lambda)
)\Big)\geq 0,
\end{align*}
where the last inequality holds by the  construction of $\underline{\varepsilon}$. 
This implies that the inverse rule is not weakly Pareto-preferred to $\phi$ in terms of expected utility under $\lambda$. Because $\lambda\in \Gamma(\underline{\varepsilon})$ is an arbitrary element, $\phi$ must be $\Gamma(\underline{\varepsilon})$-robust.
\end{proof}

In contrast, if $\varepsilon$ is too large, $\Gamma(\varepsilon)$-robustness is equivalent to the existence of a dictator.

\begin{proposition}\label{lemma lambda 2}
Let $\overline\varepsilon\equiv 2^n-2$. 
Then, for any $\Lambda\subset\Delta(\mathcal{U})$ with $\bigcap_{\varepsilon>\overline\varepsilon}\Gamma(\varepsilon)\subseteq\Lambda$,  a voting rule  $\phi\in \Phi$ is $\Lambda$-robust if and only if there exists a dictator $i\in N$ such that $\phi(x)=x_i$ for all $x\in \mathcal{X}$. 
\end{proposition}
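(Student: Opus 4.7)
The plan is to prove both directions separately, with sufficiency being immediate. If $\phi(x) = x_{i^*}$ for some fixed $i^* \in N$, then for every $u \in \mathcal{U}$ the inequality $u_{i^*}(\phi(\delta u)) = u_{i^*}(\delta u_{i^*}) > u_{i^*}(-\delta u_{i^*}) = u_{i^*}(-\phi(\delta u))$ holds pointwise, so $E_\lambda[u_{i^*}(\phi(\delta u))] > E_\lambda[u_{i^*}(-\phi(\delta u))]$ under every $\lambda \in \Delta(\mathcal{U})$; hence $-\phi$ is never weakly Pareto-preferred to $\phi$, and $\phi$ is $\Lambda$-robust for every $\Lambda$.

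For the converse I will argue by contrapositive: assuming $\phi$ is not a dictatorship, I will exhibit a single distribution $\lambda$ lying in $\bigcap_{\varepsilon > \overline\varepsilon}\Gamma(\varepsilon)$ (hence in $\Lambda$) under which every individual is weakly worse off at $\phi$ than at $-\phi$. Non-dictatorship supplies, for each $i \in N$, some $x^i \in \mathcal{X}$ with $\phi(x^i) = -x^i_i$. I then take $\lambda$ so that $\delta u$ is uniformly distributed on $\mathcal{X}$ and, conditional on $\delta u = x$, the utility profile is deterministic with $u_i(-x_i) = 0$ and $u_i(x_i) = 2^n - 1$ if $x = x^i$ and $u_i(x_i) = 1$ otherwise. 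This yields $D_i(x) \equiv E_\lambda[u_i(\delta u_i) - u_i(-\delta u_i) \mid \delta u = x]$ equal to $2^n - 1$ at $x^i$ and to $1$ elsewhere, so the ratio $\max_x D_i(x)/\min_x D_i(x) = 2^n - 1 = 1 + \overline\varepsilon$; therefore $\lambda \in \Gamma(\varepsilon)$ for every $\varepsilon > \overline\varepsilon$, i.e., $\lambda$ lies in the required intersection.

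Applying the decomposition used in the proof of Lemma~\ref{lemma gamma}, one computes
\[
E_\lambda[u_i(\phi(\delta u)) - u_i(-\phi(\delta u))] = \frac{1}{2^n}\Bigl(\sum_{x:\phi(x)=x_i}D_i(x) - \sum_{x:\phi(x)=-x_i}D_i(x)\Bigr).
\]
Writing $a_i = |\{x : \phi(x) = x_i\}|$ and $b_i = 2^n - a_i \geq 1$, the first sum equals $a_i$ (since $x^i$ lies in the opposite set, so $D_i \equiv 1$ throughout the first) and the second equals $(2^n - 1) + (b_i - 1) = 2^n + b_i - 2$, giving a difference of $2 - 2b_i \leq 0$. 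Thus $-\phi$ is weakly Pareto-preferred to $\phi$ under $\lambda$, so $\phi$ fails $\Lambda$-robustness.

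The main obstacle is pinning down why $\overline\varepsilon = 2^n - 2$ is the sharp threshold. The guiding intuition is that uniform $p_\lambda$ is the natural stress test for a non-dictator, since then for each $i$ at least one of the $2^n$ profiles has $\phi(x) = -x_i$; to exactly offset $2^n - 1$ "good" profiles each carrying baseline weight $1$, a single "bad" profile must carry weight $2^n - 1$, producing a ratio of $1 + \overline\varepsilon$. Verifying that this ratio indeed lands inside $\bigcap_{\varepsilon > \overline\varepsilon}\Gamma(\varepsilon) = \{\lambda : \max_x D_i(x)/\min_x D_i(x) \leq 2^n - 1\}$ is where the exponent $2^n$ enters the threshold.
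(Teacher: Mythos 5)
Your proof is correct and follows essentially the same route as the paper's: the hard direction is handled by contrapositive, constructing a distribution uniform over $2^n$ utility profiles indexed by $\mathcal{X}$ whose conditional net gains have ratio exactly $2^n-1$, so that $\lambda\in\bigcap_{\varepsilon>\overline\varepsilon}\Gamma(\varepsilon)$ while every non-dictatorial individual weakly prefers $-\phi$. The only differences are cosmetic --- the paper scales the net gain down to $1/(2^n-1)$ on \emph{all} profiles with $\phi(x)=x_i$, whereas you scale it up to $2^n-1$ on a single designated profile $x^i$ with $\phi(x^i)=-x^i_i$ (both yield the same tight bound, with equality when $b_i=1$) --- and you additionally write out the trivial dictator direction, which the paper leaves implicit.
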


\begin{proof}
It is enough to show that $\Gamma(\varepsilon)$-robustness implies the existence of a dictator for any  $\varepsilon>\overline\varepsilon$ because $\Gamma(\varepsilon)\supset \Gamma(\varepsilon')$ if $\varepsilon>\varepsilon'$.   
Let $\phi$ be $\Gamma(\varepsilon)$-robust. 
Seeking a contradiction, suppose that there is no dictator.  
For each $x\in \mathcal{X}$, 
let $u^x\in \mathcal{U}$ be such that 
\[
(u_i^x(1),u_i^x(-1))=
\begin{cases}
	(1/(2^n-1),0) &\text{ if }x_i=1=\phi(x),\\
	(1,0) &\text{ if }x_i=1\neq \phi(x),\\
	(0,1/(2^n-1)) &\text{ if }x_i=-1=\phi(x),\\
	(0,1) &\text{ if }x_i=-1\neq \phi(x),
\end{cases}
\]
and let 
$\lambda\in \Delta(\mathcal{U})$ be the uniform distribution over $\{u^x\in \mathcal{U}: x\in \mathcal{X}\}$. 
Note that $\lambda\in \Gamma(\overline\varepsilon)$. 
Because there is no dictator and $\lambda(u^x)=1/\#\mathcal{X}=1/2^n$ for each $x\in \mathcal{X}$, 
the responsiveness of each individual is at most $1-1/2^n$, i.e., 
$r_i(\phi,p_\lambda)
\leq 1-1/2^n$ for each $i\in N$. 
Thus, 
\begin{align*}
&E_\lambda[u_i(\phi(\delta u))-u_i(-\phi(\delta u)]\\
&= \sum_{x: \phi(x)=x_i}	E_\lambda[u_i(x_i)-u_i(-x_i)|\delta u=x]p_\lambda(x)+
\sum_{x: \phi(x)=- x_i}	E_\lambda[u_i(-x_i)-u_i(x_i)|\delta u=x]p_\lambda(x)\\
& =
r_i(\phi,p_\lambda)
/(2^n-1)-
(1-r_i(\phi,p_\lambda))
\leq (1-1/2^n)/(2^n-1)-1/2^n=0.
\end{align*}
This implies that the inverse rule is weakly Pareto-preferred to $\phi$ in terms of expected utility under $\lambda$ and that $\phi$ is not $\Gamma(\overline\varepsilon)$-robust. 
\end{proof}